\documentclass[notitlepage,superscriptaddress, 11pt]{revtex4-1}

\usepackage{graphicx,graphics,epsfig,times,bm,bbm,amssymb,amsmath,amsfonts,amsthm,mathrsfs,MnSymbol,fullpage,subfig}
\usepackage[matrix,frame,arrow]{xypic}
\usepackage[normalem]{ulem}
\usepackage{lineno}
\usepackage{color}
\usepackage[usenames,dvipsnames,svgnames,table]{xcolor}
\usepackage{enumerate}
\usepackage{relsize}
\usepackage[titletoc]{appendix}

\usepackage{caption}
\captionsetup{font=footnotesize} 

\hyphenpenalty = 900
\linespread{0.956}

\newtheorem{thm}{Theorem}

\newtheorem{prop}{Proposition}

\newtheorem{definitionenv}{Definition}
\newtheorem{remarkenv}[definitionenv]{Remark}

\newcommand{\bes} {\begin{subequations}}
\newcommand{\ees} {\end{subequations}}
\newcommand{\bea} {\begin{eqnarray}}
\newcommand{\eea} {\end{eqnarray}}

\newcommand{\defeq}{\stackrel{\mathsf{def}}{=}}

\newcommand{\abs}[1]{\ensuremath{\left|#1\right|}} 
\newcommand{\norm}[1]{\ensuremath{\left\|#1\right\|}} 
\newcommand{\beq}{\begin{equation}}

\newcommand{\eeq}{\end{equation}}

\newcommand{\ip}[2]{\left\langle #1, #2\right\rangle} 


\def\>{\rangle}
\def\<{\langle}

\newcommand{\braket}[2]{\langle #1\vert #2\rangle}

\newcommand{\ket}[1]{\left|#1\right\rangle}

\usepackage{verbatim}

\def\+#1{\mathcal{#1}}
\newcommand{\bpm}{\begin{pmatrix}}
\newcommand{\epm}{\end{pmatrix}}

\newcommand{\ra}{\rightarrow}

\newcommand{\lp}{\left(}
\newcommand{\rp}{\right)}

\newcommand{\mb}{\mathbb}

\DeclareMathOperator{\polylog}{polylog}

\newtheorem{sub}{Subroutine}
\newenvironment{subroutine}
	{
    	\par\addvspace{\baselineskip}
    	\hrule
        \begin{sub}\rm
   }
   {	
   		\end{sub}
        \hrule\
        \par\addvspace{\baselineskip}
    }

\begin{document}

\title{Quantum algorithms for feedforward neural networks}

\author{Jonathan Allcock}
\email{jonallcock@tencent.com}
\affiliation{Tencent Quantum Laboratory}
\author{Chang-Yu Hsieh}
\email{kimhsieh@tencent.com}
\affiliation{Tencent Quantum Laboratory}
\author{Iordanis Kerenidis}
\email{jkeren@irif.fr}
\affiliation{CNRS, IRIF, Universit\'e Paris Diderot, Paris, France}
\author{Shengyu Zhang}
\email{shengyzhang@tencent.com}
\affiliation{Tencent Quantum Laboratory}
\affiliation{The Chinese University of Hong Kong}


\begin{abstract}
    Quantum machine learning has the potential for broad industrial applications, and the development of quantum algorithms for improving the performance of neural networks is of particular interest given the central role they play in machine learning today. In this paper we present quantum algorithms for training and evaluating feedforward neural networks based on the canonical classical feedforward and backpropagation algorithms.  Our algorithms rely on an efficient quantum subroutine for approximating the inner products between vectors in a robust way, and on implicitly storing large intermediate values in quantum random access memory for fast retrieval at later stages.   The running times of our algorithms can be quadratically faster in the size of the network than their standard classical counterparts since they depend linearly on the number of neurons in the network, as opposed to the number of connections between neurons as in the classical case. This makes our algorithms suited for large-scale, highly-connected networks where the number of edges in the network dominates the classical algorithmic running time. Furthermore, networks trained by our quantum algorithm may have an intrinsic resilience to overfitting, as the algorithm naturally mimics the effects of classical techniques such as drop-out used to regularize networks. Our algorithms can also be used as the basis for new quantum-inspired classical algorithms which have the same dependence on the network dimensions as their quantum counterparts, but with quadratic overhead in other parameters that makes them relatively impractical.  
\end{abstract}

\maketitle

\section{Introduction}
Machine learning has been one of the great success stories of computation in recent times. From self-driving cars and speech recognition to cancer detection and product recommendation, the applications of machine learning have had a transformative impact on our lives.  At the heart of many machine learning tasks are artificial neural networks: groups of interconnected computational nodes loosely modelled on the neurons in our brains. In a supervised learning scenario, a neural network is first trained to recognize a set of labelled data by learning a hierarchy of features that together capture the defining characteristics of each label. Once trained, the network can then be evaluated on previously unseen data to predict their corresponding labels.  While their origins can be traced back to the 1940s, artificial neural networks enjoyed a resurgence of interest in the 1980s when a method for computing gradients known as backpropagation \cite{rumelhart1986learning} was popularized, which dramatically increased the efficiency with which networks could be trained to recognize data.  The following decades have seen researchers develop ever more sophisticated techniques to improve the performance of neural networks, and the best algorithms can now outperform humans on realistic image recognition tasks \cite{he2016deep}. Regardless of the additional techniques used, backpropagation remains the key algorithmic component for neural network training.

In spite of the progress made over the years, significant resources are still required to train deep networks needed for industrial and commercial grade problems, and long training times on clusters of GPUs are often necessary.  Furthermore, a network trained on a particular data set for a particular task may not perform well on another task. For this, an entirely different network may need to be trained from scratch.  There is thus considerable benefit in any method for speeding up or otherwise improving the training and evaluation of neural networks.


The developments in machine learning have been accompanied by the rise of another potentially transformative technology: quantum computing. From the original proposal of Feynman \cite{feynman1982simulating} in the 1980s for using the rules of quantum mechanics to carry out computation, quantum computing has developed into one of the most exciting fields of research today.  The last several years have seen increased attention on applications of quantum computing for industry, of which quantum chemistry, material design, optimization and quantum machine learning are leading candidates.  A natural and fundamentally important question at the intersection of machine learning and quantum computing is whether quantum algorithms can offer any improvement on the classical algorithms currently used for neural networks. 

Several challenges present themselves immediately.  Firstly, the process of training and evaluating neural networks is highly sequential. Quantum algorithms can be a natural fit for performing tasks in parallel by harnessing the phenomenon of quantum superposition. However, they are poorly suited to situations where data must be computed and stored at many intermediate steps, a process which destroys quantum coherence.  Secondly, neural network performance relies critically on the ability to perform non-linear transformations of the data.  Quantum mechanics, however, is linear, and effectively implementing non-linear transformations can be non-trivial \cite{cao2017quantum}.  Finally, if a quantum algorithm is to be used for training a classical neural network, the training data as well as the parameters which define the network must first be encoded in quantum states.  Unless there is an efficient method for preparing these quantum states, the state-preparation procedure can be a time-consuming bottleneck preventing a quantum algorithm from running more quickly than classically. This is especially challenging for the parameters corresponding to the network weights which are so numerous as to rule out efficient methods for generating their corresponding quantum states directly.

There is some qualitative cause for optimism. The standard classical neural network algorithms typically employed make heavy use of linear algebra, for which quantum algorithms may have an advantage in certain cases \cite{biamonte2017quantum}. Additionally, it is desirable for neural networks to be robust to noise and small errors, which can achieved classically by introducing perturbations during the training process \cite{srivastava2014dropout}. In quantum computing, the randomness inherent in the outcome of measurements can have the effect of such introduced perturbations and, if harnessed correctly, a quantum algorithm may achieve a natural robustness to noise without additional cost. 
This is important as, in many cases, raw computational speed is not necessarily the main concern in practical machine learning. Other issues such as the size of the data sets, generalization errors, or how
robust the algorithms are to noise and perturbations may indeed be of much greater practical concern.
Nevertheless, there is an explicit connection between speed and performance. In many cases, the bottleneck is the fact that one can only spend, say, a day for training and not a year. With this restriction, one selects a neural network with a given size so that training can be completed in the allotted time. With faster training, larger neural networks can be trained, and more experiments can be conducted for different choices of network architecture, loss function and hyperparameters.  Taken together, these can all lead to better eventual performance.


In this work we consider quantum algorithms for feedforward neural networks, in which the propagation of information proceeds from start to finish, layer by layer, without any loops or cycles. Feedforward neural networks not only constitute a canonical category of neural network of fundamental importance in their own right, they also form key components in practical machine learning architectures such as deep convolutional networks and autoencoders. We propose quantum algorithms for training and evaluating robust versions of classical feedforward neural networks---which we call $(\epsilon,\gamma)$-feedforward neural networks---based on the standard classical algorithms for feedforward and backpropagation, adapted to overcome the challenges and exploit the opportunities mentioned above.   To achieve this, we use the fact that quantum computers can compute approximate inner products of vectors efficiently, and define a robust inner product estimation procedure that outputs such an estimate in a quantum data register. By having inner products stored in register, rather than in the phases of quantum states, we are able to directly implement nonlinear transformations on the data, circumventing the problem of non-linearity entirely. We address the state-preparation problem for short length vectors by storing them in a particular data-structure \cite{kerenidis2017quantum,kerenidis2018gradient, wossnig2018quantum} in quantum random access memory (qRAM) \cite{giovannetti2008quantum, prakash2014quantum}, so that their elements may be queried in quantum superposition, and their corresponding states generated efficiently. For the network weight matrices which are too big to be stored efficiently in this manner, we reconstruct their corresponding quantum states indirectly by superposing histories of shorter length vectors stored in qRAM.

In a recent important theoretical work, Tang \cite{tang2018quantum} showed that the data structure required for fast qRAM-based inner product estimation can also be used to classically estimate inner products.  Quantum algorithms based on such data structures can thus give rise to quantum-inspired classical ones as well \cite{tang2018quantum02, gilyen2018quantum}, i.e. classical algorithms based on their quantum counterparts can be defined with only a polynomial slow-down in running time. However, in practice the polynomial factors can make a difference, and analysis of a number of such algorithms \cite{arrazola2019quantum} shows that care is needed when assessing their performance relative to the quantum algorithms from which they were inspired. The situation is the same with the quantum algorithms we propose here.  New classical algorithms for training and evaluating $(\epsilon,\gamma)$-feedforward neural networks based on our quantum algorithms can indeed be defined, and in a later section we compare these with both their quantum versions as well as the standard classical algorithms.  While interesting from a theoretical perspective, these quantum-inspired algorithms will always have a worse performance than their quantum versions, and it is unclear whether they will ever perform well enough in practice to replace existing methods.

Quantum machine learning \cite{schuld2015introduction, biamonte2017quantum, dunjko2018machine} in general and quantum supervised learning \cite{lloyd2013quantum, rebentrost2014quantum,wiebe2015quantum, liu2015fast, schuld2016prediction,qSFA} in particular are fast growing areas of research.  However, while quantum generalizations of neural networks have been proposed for feedforward networks \cite{cao2017quantum, wan2017quantum, romero2017quantum, farhi2018classification, liao2018quantum, schuld2019quantum}, Boltzmann Machines \cite{wiebe2016quantum, verdon2017quantum, kieferova2017tomography, wiebe2019generative} and Hopfield networks \cite{rebentrost2018quantum}, the current work is, to our knowledge, the first proposed quantum algorithms for training and evaluating feedforward neural networks that can, in principle, offer an advantage over the canonical classical feedforward and backpropagation algorithms. While it remains to be seen whether our algorithms can be used in the future to obtain a practical advantage over classical methods, 
we believe, backed with our theoretical analysis and initial simulations, that our results indicate one very promising way that quantum techniques can be applied to neural networks, and that in the future more sophisticated practical techniques may be built on these or other ideas.

\section{Results}
We will discuss $(\epsilon,\gamma)$-feedforward neural networks in more detail later. For now, it is enough to think of these as network in which, with probability at least $1-\gamma$, inner products between vectors can be computed to within error $\epsilon$ of the true value.  Our main results are the following.

\vspace{0.2cm}
\noindent 
{\bf Quantum Training:} [See Theorem \ref{thm:main_result}]
{\em There exists a quantum algorithm for training $(\epsilon,\gamma)$-feedforward neural networks in time $\tilde{O}\lp (TM)^{1.5}N \frac{\log(1/\gamma)}{\epsilon}R\rp$, where $T$ is the number of update iterations, $M$ the number of input samples in each mini-batch, $N$ is the total number of neurons in the network, and $R$ is a factor that depends on the network and training samples, and which numerical evidence suggests is small for many practical parameter regimes.}
\vspace{0.2cm}

 Here and in what follows, $\tilde{O}$ hides polylogarithmic factors in $TMN$. Once the neural network is trained it can be used to label new input data. This evaluation is essentially the same as the feedforward algorithm that is used in the training, and we obtain the following result.

\vspace{0.2cm}
\noindent 
{\bf Quantum Evaluation:} [See Theorem \ref{thm2}]
{\em For a neural network whose weights are explicitly stored in a qRAM data structure, there exists a quantum algorithm for evaluating an $(\epsilon,\gamma)$-feedforward neural network in time $\tilde{O}\lp N \frac{\log(1/\gamma)}{\epsilon}R_e\rp$, where $N$ is the total number of neurons in the network, and $R_e$ is a factor that depends on the network and training samples, and is expected to be small for practical parameter regimes.}
\vspace{0.2cm}



In contrast, the classical training algorithm has running time $O(TME)$ while the evaluation algorithm takes time $O(E)$, where $E$ is the total number of \emph{edges} in the neural network. It is interesting to note that the complexity order reduces from the number of edges (neural connections) classically to the number of vertices (neurons) in the quantum algorithm. This gap can be very large; indeed, an average neuron in the human brain has 7000 synaptic connections to other neurons \cite{drachman2005we}. 

The running time of our quantum algorithms also depends on $\log(1/\gamma)/\epsilon$ and on terms $R$ and $R_e$ respectively, which depend on the values of certain vector norms which appear during the network evaluation and will be explicated in later sections. While the combined effect of these terms must be taken into account, we will give arguments and numerical evidence that indicate that, in practice, these terms do not contribute significantly to the running time. 

\subsection{Feedforward neural networks}

Feedforward neural networks are covered by a vast amount of literature, and good introductory references are available \cite{goodfellow2016deep, nielsen2017deep}. Here we simply give a brief summary of the concepts required for the current work. 

A feedforward neural network consists of $L$ layers, with the $l$-th layer containing $n_l$ neurons. 
A weight matrix $W^l\in\mb{R}^{n_l\times n_{l-1}}$ is associated between layers $l-1$ and $l$, and a bias vector $b^l\in\mb{R}^{n_l}$ is associated to each layer $l$ (Fig.\ref{fig:neural-network}).  The total number of neurons is $N = \sum_{l=1}^L n_l$ and the total number of edges in the network is $E = \sum_{l=2}^L n_l\cdot n_{l-1}$. For each level $l$,  let $a^l \in\mb{R}^{n_l}$ to be the vector of outputs (activations) of the neurons.
Given a non-linear activation function $f$, the network feedforward rule is given by $a^{l}_j = f(z^l_j)$, where $z^l_j = \sum_k W^l_{jk}a^{l-1}_l + b^l_j$ (Fig \ref{fig:neuron}). It will be convenient to express this as 
\begin{linenomath*}
\begin{equation}\label{eq:network_update}
   z^l_j = \< W^l_j, a^{l-1} \> +b^l_j,  
\end{equation}
\end{linenomath*}
where $\langle x,y\rangle$ is the Euclidean inner product between vectors $x$ and $y$, and $W^l_j$ is the $j$-th row of $W^l$, i.e. $\lp W^l_j\rp_k = W^l_{jk}$.
For simplicity, we assume that the activation function is the same across all layers in the network, although our algorithm works more generally when the activation function is allowed to vary across layers. Common activation functions are the sigmoid $\sigma(z) = \frac{1}{1+ e^{-z}}$ (when neurons take values in $[0,1]$), the hyperbolic function $\tanh(z) = 2\sigma(2z) - 1$ (when neurons take values in  $[-1,1]$), and the Rectified Linear Unit (ReLU function) $f(x) = \max(0,x)$. A training set $\+{T}=\{(x^{1},y^{1}),(x^{2},y^{2}),\ldots \}$ for a learning task consists of vectors $x^{i}\in\mb{R}^{n_{1}}$ and corresponding labels $y^{i}\in\mb{R}^{n_L}$.  This set is used, via a training algorithm, to adjust the network weights and biases so as to minimize a chosen cost function $C:\mb{R}^{n_L}\ra \mb{R}$, which quantifies the network performance. The goal is to obtain parameters such that when the network evaluates a new input which is not part of the training set, it outputs the correct label with a high degree of accuracy.


\begin{figure}%
\includegraphics[height=5.0cm]{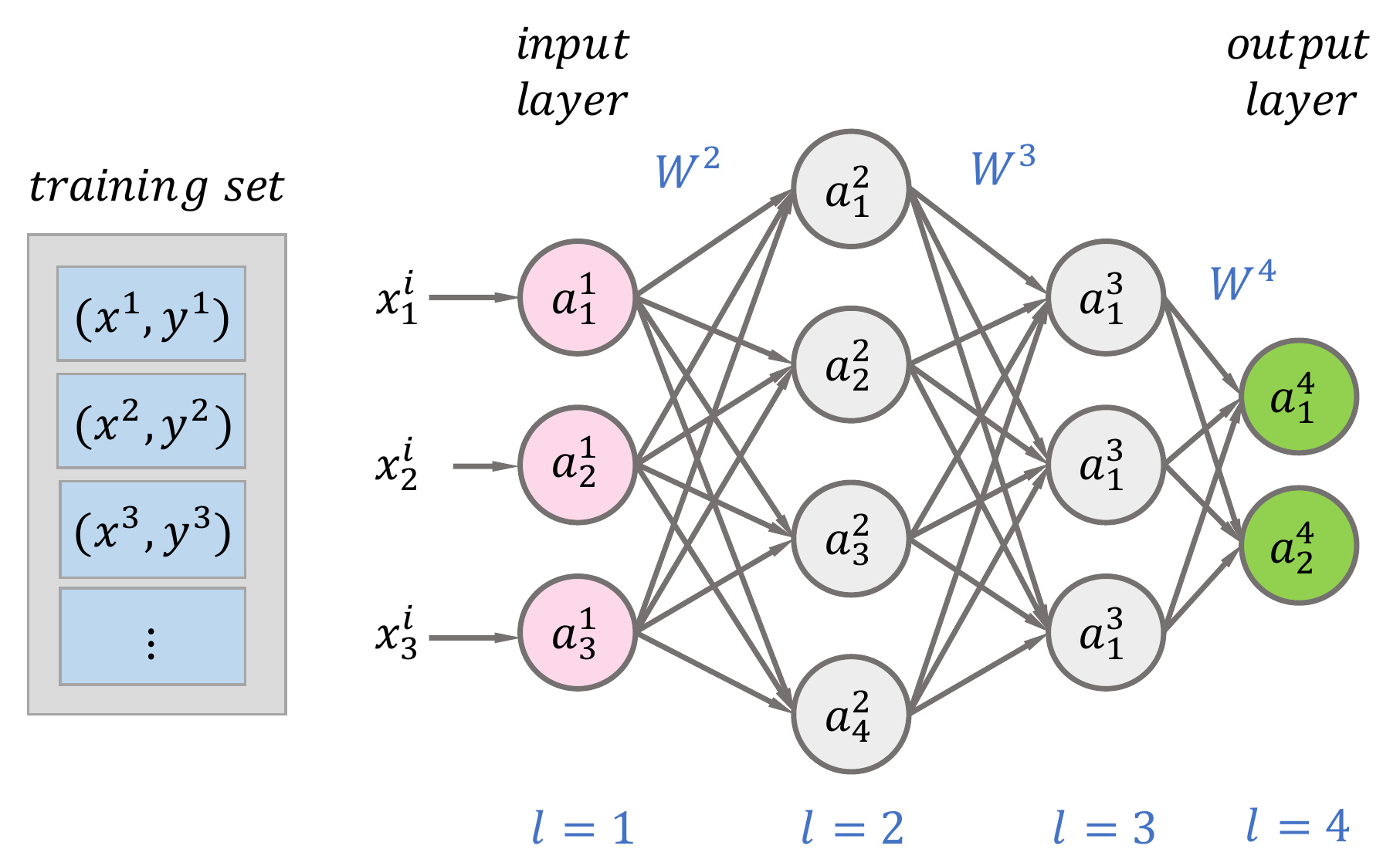}
\caption{Feedforward neural network with $L=4$ layers. A weight matrix $W^l\in\mb{R}^{n_l\times n_{l-1}}$ is associated between levels $l-1$ and $l$, and a bias vector $b\in\mb{R}^{n_l}$ (not shown) is associated to each level $l$. During training, a (data,label) pair $(x^{i},y^{i})$ is chosen from the training set, and the data is fed into the input layer of neurons. The weights and biases in the network determine the activations of subsequent layers of neurons.  Finally, the values from the output layer of neurons is compared with the true label $y^i$.}
\label{fig:neural-network}%
\end{figure}

\begin{figure}%
\includegraphics[height=4.0cm]{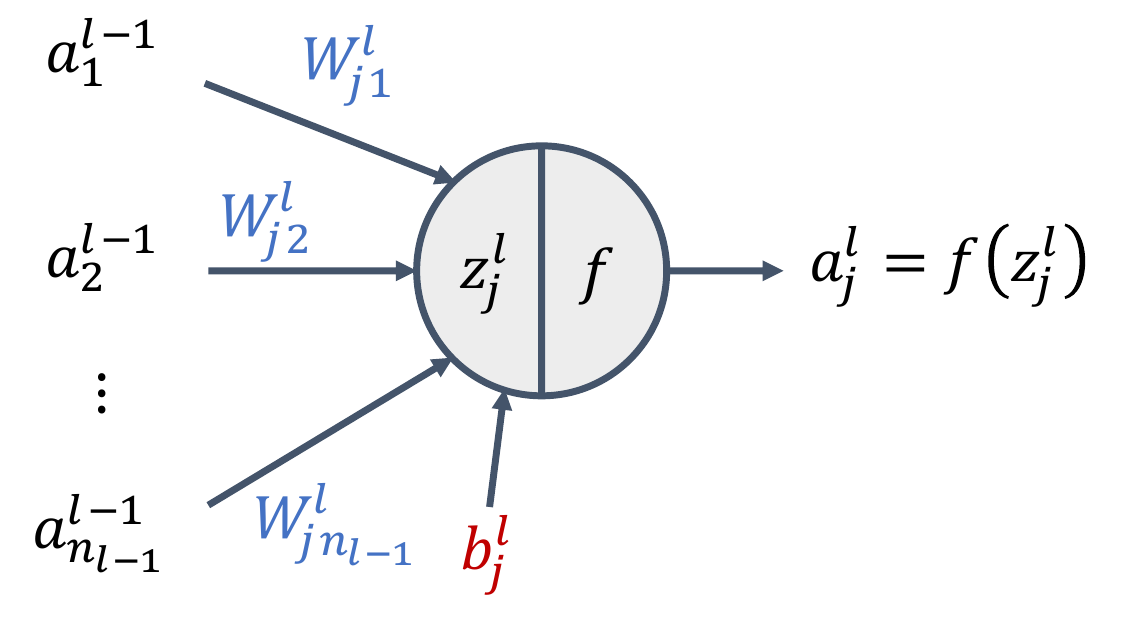}
\caption{The $j$-th neuron at level $l$ of a network. The output activation of the neuron is computed in two steps. Firstly, the value $z_j = \sum_k W^l_{jk}a^{l-1}_k + b^l_j$ is calculated, which depends on the weights $W^l_{ji}$ and bias $b^l_j$ as well as the activations from the previous layer of neurons.  A non-linear activation function $f$ is then applied to $z^l_j$, and the value $f(z_j)$ is output.}
\label{fig:neuron}%
\end{figure}

Classically, network training consists of the following steps:
\begin{enumerate}
\item \textbf{Initialization of weights and biases.} Various techniques are used, but a common choice is to set the biases to a small constant, and to draw the weight matrices randomly from a normal distribution according to $W^{l}_{jk}\sim\+N\lp 0,\frac{1}{\sqrt{n_{l-1}}}\rp$ \cite{glorot2010understanding}. 
\item \textbf{Feedforward.} Select a pair $(x,y)$ from the training set.  Assign the neurons in the first layer of the network to have activations $a^{1}_j = x_j$. Pass through the network layer-by-layer, at each layer $l$ computing and storing the vectors $z^{l} = \left\langle W^{l}_j, a^{l-1}\right\rangle + b^l_j$ and $a^{l} = f\lp z_j^{l}\rp$. The running time of this  procedure is $O\lp \sum_{l=2}^L n_l\cdot n_{l-1}\rp = O\lp E\rp$: at each level $l$ one must evaluate $n_l$ activations, and each activation involves calculating an inner product of dimension $n_{l-1}$ which takes times $n_{l-1}$.
\item \textbf{Backpropagation.}  Given $z^{L}$ and $a^L$ at the end of the feedforward process, define vector $\delta^{L}$ with components $\delta^{L}_j = f'\lp z^{L}_j \rp \frac{\partial C}{\partial a^L_j}$, where $C$ is the chosen cost function. Then, proceeding backwards through the network, compute and store the vectors $\delta^{l}_j = f'\lp z^l_j\rp\left\langle \lp W^{l+1} \rp^T_j, \delta^{l+1}\right\rangle$, where $\lp W^{l+1}\rp^T$ is the matrix transpose of $W^{l+1}$. The running time of the backpropagation algorithm is again $O(E)$, by the same reasoning as for the feedforward algorithm.
\item\textbf{Update weights and biases.} Repeat the feedforward and backpropagation steps for a mini-batch of inputs of size $M$, and then update the weights and biases by stochastic gradient descent, according to:
\begin{linenomath*}
\begin{align}
W^{t+1,l}_{jk} &= W^{t,l}_{jk} - \eta^{t,l} \frac{1}{M}\sum_{m}^Ma^{t,m,l-1}_k \delta^{t,m,l}_j  \label{eq:w_update} \\
b^{t+1,l}_j &= b^{t,l}_j - \eta^{l} \frac{1}{M}\sum_{m}^M \delta^{t,m,l}_j
\end{align}
\end{linenomath*}
where the superscipts $t\in[T]$ and $m\in[M]$ denote the iteration number and mini-batch element respectively, and $\eta^{t,l} >0$ are update step sizes. 
\item \textbf{Iterate}. Repeat the previous weight and bias update step $T$ times, each time with a different mini-batch.  
\end{enumerate}
Note that the method for choosing each mini-batch of inputs is left to the user. A common choice in practice, and the one that used in our numerical simulations, is to divide the training procedure into a number of periods known as \textit{epochs}.  At the beginning of each epoch, the training set is randomly shuffled and successive mini-batches of size $M$ are chosen without replacement.  The epoch ends when all training examples have been selected, after which the next epoch begins.  Alternatively, one may consider selecting the mini-batches by randomly sampling-with-replacement from the full training set. 

The overall running time of the classical training algorithm is $O(TME)$ since there are $TM$ steps and each step requires performing algorithms to carry out the feedforward and backpropagation procedures. Note that the product $TM$ is equal to the number of epochs multiplied by the size $\abs{\+{T}}$ of the training set. The fact that the training of general feedforward neural networks depends linearly on the number of edges makes large-size fully-connected feedforward networks computationally expensive to train. Our quantum algorithm reduces the dependence on the network size from the number of edges to the number of vertices. 

Once the network training is complete, the classical evaluation algorithm consists of running the feedforward step on a previously unseen input, i.e. on data that lies outside the training set. The success of deep learning tells us that the network training produces weights and biases that perform well in classifying new data. 

\subsection{Robust feedforward neural networks}

A perennial concern in neural network training is the problem of overfitting.  Since a large network may have many more free parameters than training data points, it is easy to train a network that recognizes the training data accurately, yet performs poorly on data that lies outside the training set. Various techniques are used in practice to reduce overfitting. For instance, one may consider adding a term to the cost function which penalizes a network with too many large parameters.  Alternatively, one can train the data on an ensemble of different networks, and average the performance across the ensemble. While this may be prohibitively costly to perform exactly, the effect can be approximated by randomly deactivating certain neurons in each layer, a technique known as dropout.  A similar result can be achieved by adding random multiplicative Gaussian noise to the activation of each neuron during training. These techniques are both seen to be effective at preventing overfitting \cite{srivastava2014dropout}, and can also improve network performance by avoiding local minima in parameter space.  


Motivated by the benefits of such random network perturbations during training, and anticipating the requirement of quantum processes to generate randomized outcomes, we consider a generalization of the training and evaluation algorithms where the inner products in the feedforward and backpropagation steps may not be evaluated exactly. Instead, with probability $1-\gamma$, they are estimated to within some error tolerance $\epsilon$, either relative or absolute, depending on whether the inner product is large or small.  That is, the feedforward step computes values $s^l_j$ satisfying 
\begin{linenomath*}
\begin{equation*}
    \abs{s_j^{l} - \ip {W_j^{l}}{a^{l-1}} } \leq \max \left\{ \epsilon  \abs{ \ip{W_j^{l}}{a^{l-1}}}, \epsilon \right\}  \quad \text{with probability } \geq 1-\gamma, 
\end{equation*}
\end{linenomath*}
and similarly for the inner product calculation between $\lp W^{l+1}\rp_j^T$ and $\delta^{l+1}$ in the backpropgation step. Note that we do not specify how the $s^{l}_j$ are generated above. The way this is realized will be left to specific implementations of the algorithm.  For instance, a simple classical implementation is to first compute the inner product $\ip{W_j^{l}}{a^{l-1}}$ and then add independent Gaussian noise bounded by the maximum of $\epsilon \abs{ \ip{ W_j^{l}}{a^{l-1}}}$ and $\epsilon$. In the quantum case, the $s^{l}_j$ will be generated by a quantum inner product procedure which is not perfect, but rather outputs an estimate of the true inner product satisfying the $(\epsilon,\gamma)$ conditions required. The reason we allow for either a relative or absolute error is to ensure that the quantum procedure can be carried out efficiently regardless of the magnitude of the inner products involved.

We refer to such networks as $(\epsilon,\gamma)$-feedforward neural networks, of which the standard classical feedforward neural network is a special case. Our simulation results show that, for reasonable tolerance parameters, the generalization to $(\epsilon,\gamma)$ estimates of inner products does not hurt network performance.   

For small enough $\epsilon$, the running time of classically computing the feedforward and backpropagation steps remains $O(E)$, since in general one needs to look at a large fraction of the coordinates of two vectors to obtain an $\epsilon$-error approximation to their inner product. The classical training time for $(\epsilon,\gamma)$-feedforward neural networks is thus $O(TME)$, as for the original case where inner products are evaluated exactly.

\subsection{Quantum Training}

While it is clearly desirable to improve on the $O(TME)$ classical result using a quantum algorithm, there are several obstacles to achieving this. As mentioned in the introduction, feedforward neural network training and evaluation is a highly sequential procedure, where at each point one needs to know the results of previously computed steps. Quantum algorithms, by contrast, are typically well suited to performing tasks in parallel, but not for performing tasks which require sequential measurements to be performed, a process which destroys quantum coherence. In addition,  a critical step classically is the application of a non-linear activation function to each neuron. Given that quantum mechanics is inherently linear, applying non-linearity to quantum states is non-trivial. Finally, the size of each weight matrix is $n_l \times n_{l-1}$, so even explicitly writing down these matrices for every step of the algorithm takes time $O(E)$. 

We address these challenges by using a hybrid quantum-classical procedure which follows the classical training algorithm closely.  In our algorithm all sequential steps are taken classically.  At each step, quantum operations are only invoked for estimating the inner products of vectors, and for reading and writing data to and from qRAM.

Given a vector $x \in \mb{R}^n$, define the corresponding normalized quantum state $\ket{x} = \frac{1}{\norm{x}}\sum_{j=1}^n x_j\ket{j}$, where $\norm{\cdot}$ is the $\ell_2$ norm. The inner product of two quantum states therefore satisfies  $\< x | y \> := \frac{\ip{x}{y}}{\norm{x}\norm{y}}$. Two key ingredients are the following:

\medskip\textbf{Robust Inner Product Estimation (RIPE).} If quantum states $\ket{x}$ and $\ket{y}$ can each be created in time $T_U$, and if estimates of the norms $\norm{x}$ and $\norm{y}$ are known to within $\epsilon/3$ multiplicative error, then a generalization of the inner product estimation subroutine of \cite{kerenidis2018qmeans} allows one to perform the mapping $\ket{x}\ket{y}\ket{0}\rightarrow\ket{x}\ket{y}\ket{s}$ where, with probability at least $1-\gamma$:
\begin{linenomath*}
\begin{equation*}
  \abs{s-\< x,y\>} \leq 
  \begin{cases}
   \epsilon  \abs{\< x,y\>} & \text{in time }\widetilde{O}\left(\frac{T_U \log(1/\gamma)}{ \epsilon}\frac{\norm{x}\norm{y}}{ \abs{\ip{x}{y}}}\right)\\
   \epsilon & \text{in time }\widetilde{O}\left(\frac{T_U \log(1/\gamma)}{ \epsilon} \norm{x}\norm{y}\right)
  \end{cases}
\end{equation*}
\end{linenomath*}

The inner product estimates $s$ above are computed in a quantum register and not in the phase of a quantum state, enabling non-linear activation functions to be applied to it.  When the data required is stored in qRAM (see below), this inner product calculation is efficient and provides roughly a factor $O(n_l)$ saving in time per layer of the network. 
    
 \medskip\textbf{Quantum Random Access Memory.} For the RIPE algorithm to efficiently compute an approximation of the inner product $\ip{x}{y}$, we need an efficient way to prepare the states $\ket{x}$ and $\ket{y}$. A qRAM is a device that allows for classical data be queried in superposition. That is, if the classical vector $x\in\mb{R}^N$ is stored in qRAM, then a query to the qRAM implements the unitary $\sum_j \alpha_j\ket{j}\ket{0}\ra \sum_j \alpha_j\ket{j}\ket{x_j}$.  Importantly, if the elements $x_j$ of $x$ arrive as a stream of entries $(j, x_j)$ in some arbitrary order, then $x$ can be stored in a particular data structure \cite{kerenidis2017quantum}---which we will refer to as an $\ell_2$-Binary Search Tree ($\ell_2$-BST)---in a time linear in $N$ (up to logarithmic factors) and, once stored, $\ket{x} = \frac{1}{\norm{x}}\sum_j x_j\ket{j}$ can be created in time polylogarithmic in $N$ (Fig.\ref{fig:qRAM-data-structure}).

\medskip Conceptually, we would like to follow the classical training algorithm, modified so that the network biases $b^{t,l}$, weights $W^{t,l}$, pre-activations $z^{t,m,l}$, activations $a^{t,m,l}$ and backpropagation vectors $\delta^{t,m,l}$  are stored in a qRAM $\ell_2$-BST at every step. Their corresponding quantum states can then be efficiently created, and the RIPE algorithm used to estimate the inner products required to update the network. However, there is a problem.  The weight matrices $W^{t,1}, W^{t,2},\ldots, W^{t,L}$ have a combined total of $E$ entries. Thus it will take time $\tilde{O}(E)$ just to write all the matrix elements into qRAM. This is true even for the initial weight matrices if they are generated by choosing independent random variables for each element, as is common classically. Each time the weights are updated, it will take an additional $\tilde{O}(E)$ time to write the new values to qRAM. Furthermore, the RIPE algorithm requires estimates of the norms of the vectors involved.  This is not a problem for the $a^{t,m,l}$ and $\delta^{t,m,l}$ vectors, as their storage in the $\ell_2$-BST allows their norms to be accessed.  However, if the weight matrices are not stored in this way, then the norms of their rows and columns are also not available to use explicitly. 

We circumvent these issues in two ways.

\medskip\textbf{Low rank initialization.} We set the initial weights $W^{1,l}$ to be low rank by selecting a small number $r$ of pairs of random vectors $a^{0,m,l}$ and $\delta^{0,m,l}$ ($m =1,\ldots ,r$) and taking the sums of their outer-products. If we define $\eta^{0,1}=-1$ then equation \eqref{eq:w_update} for the weights can be expressed as
\begin{linenomath*}
\begin{equation*}
W^{t,l}_{jk} = \sum_{\tau=0}^{t-1}\sum_{m=1}^M \frac{-\eta^{\tau,l}}{M} \delta^{\tau,m,l}_j a^{\tau,m,l-1}_k    
\end{equation*}
\end{linenomath*}
where, for a fixed $t$ and $l$, only $r$ of the $M$ possible $\delta^{0,m,l}$ and $\delta^{0,m,l}$ are non-zero.
We shall see that this low-rank initialization does not affect the network performance in practice, and in section \ref{sec:discussion} we provide a classical justification for this initialization as well as numerical evidence supporting the use of $r =O(\log n_l)$ random pairs.

\medskip\textbf{Implicit storage of weight matrices.}  For each $t\in[T],l\in\{2,\ldots L\},j\in[n_L]$, define the matrix $X^{[t,l,j]}\in\mb{R}^{t\times M}$ with matrix elements $\lp X^{[t,l,j]}\rp_{\tau m} = \frac{-\eta^{\tau,l}}{M}\delta_j^{\tau,m,l}\norm{a^{\tau,m,l-1}}$, with $\tau \in\{0,\ldots t-1\}$, $m\in [M]$ . We will store all the matrix elements of each $X^{[t,l,j]}$ in qRAM (Fig.\ref{fig:X-matrix}), which allows for efficient computation of the states $\ket{W^{t,l}_j}$ on the fly, as opposed to explictly storing all the values $W^{t,l}_{j,k}$ which is prohibitively expensive. More specifically, in the Methods section we show that by querying the rows of each matrix in superposition over iterations $\tau < t$, it is possible to generate the weight states $\ket{W^{t,l}_j}$ in time  $T_W = \tilde{O}\lp \frac{\norm{X^{[t,l,j]}}_{F}}{\norm{W^{t,l}_j}}\sqrt{TM}\rp$, and estimate $\norm{W^{t,l}_j}$ to multiplicative error $\xi$ in time $O(T_W/\xi)$ respectively, where $\norm{X^{[t,l,j]}_F}$ is the Frobenius norm  $X^{[t,l,j]}$ . Similar results apply to generating the states $\ket{\lp W^l\rp^T_j}$ corresponding to the columns of the weight matrices, and estimating their norms.  

\begin{figure}
\includegraphics[height=5.0cm]{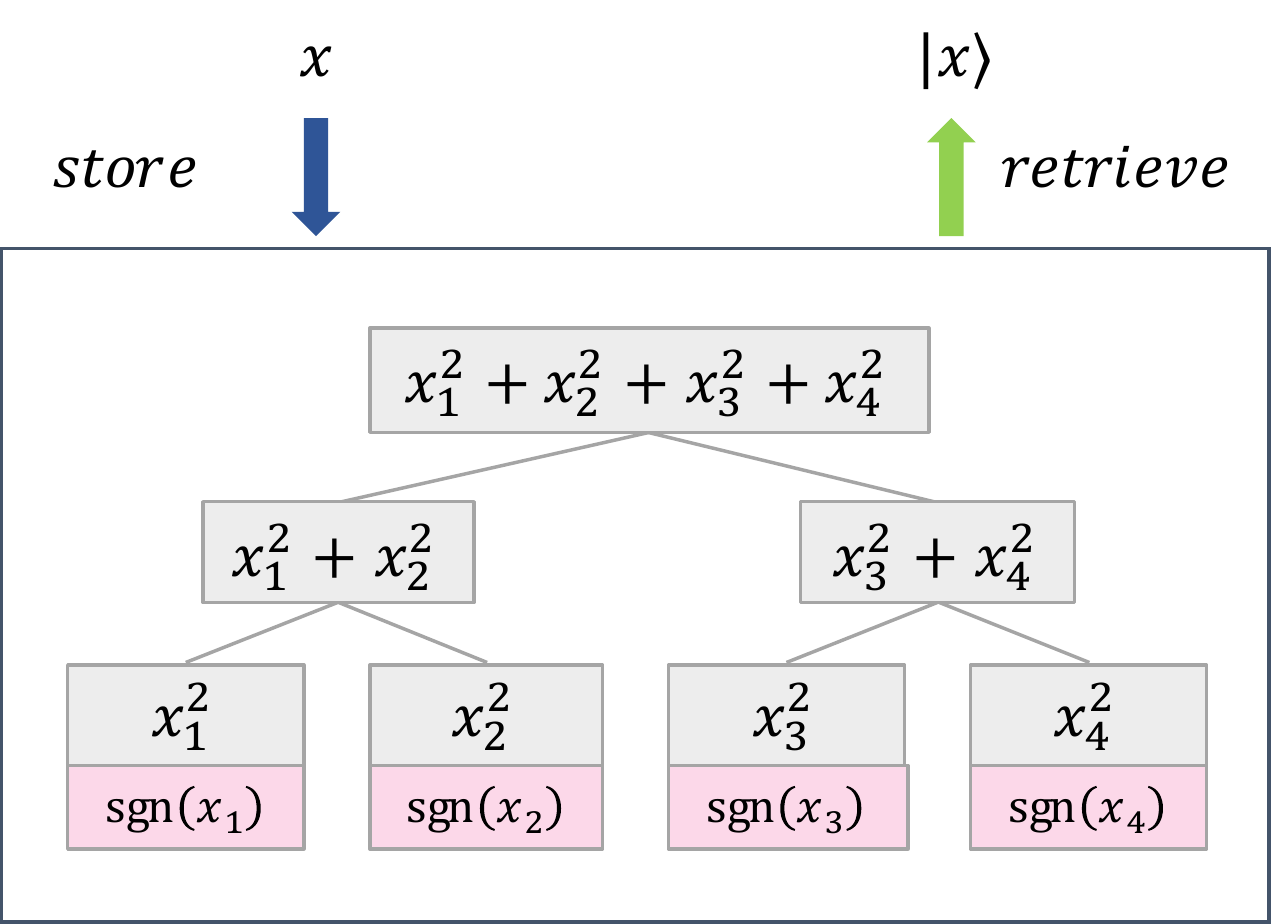}
\caption{$\ell_2$-BST data structure.  A classical vector $x\in\mb{R^N}$ is stored in a binary tree, with leaves containing the squared vector components $x_i^2$ as well as the signs of $x_i$. The tree is populated recursively, with each parent storing the sum of the values of its children. The root of the tree thus stores $\norm{x}$. As shown in \cite{kerenidis2017quantum}, it takes time $\tilde{O}(N)$ to store $x$ in the date structure. Once stored, qRAM access to the data structure allows the quantum state $\ket{x} = \frac{1}{\norm{x}}\sum_j x_j \ket{j}$ to be generated in time polylogarithmic in $N$.}
\label{fig:qRAM-data-structure}
\end{figure}

\begin{figure}
\includegraphics[height=4.0cm]{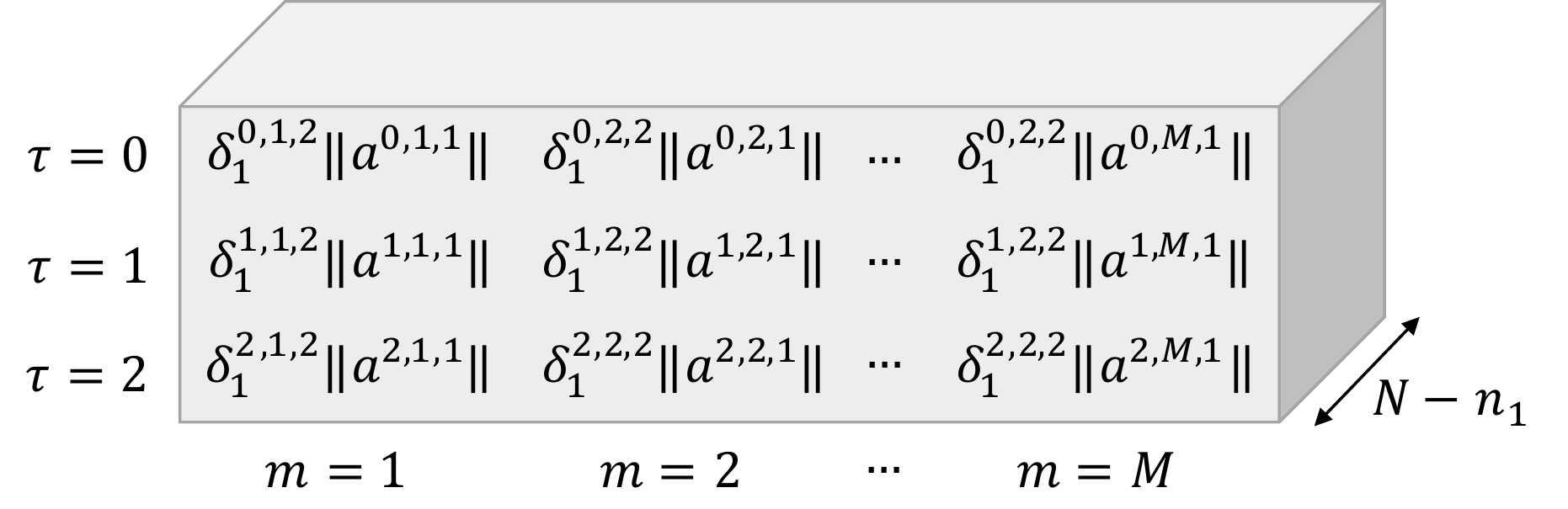}
\caption{Implicit storage of weight matrices via the matrices $X^{[t,l,j]}$. At any iteration $t$ there are $\sum_{l=2}^L n_l = N-n_1$ such matrices, which are stored in the $\ell_2$-BST data structure of Fig.\ref{fig:qRAM-data-structure}. Shown here are the matrix elements of $X^{[3,2,1]}$, with the factors of $-\eta^{\tau,l}/M$ omitted for clarity of presentation. The other  matrices corresponding to different $l,j$ values are indicated behind. Storing these matrices in an $\ell_2$-BST allows for efficient creation of the states $\ket{W^l_j}$ and estimation of their norms.  After each iteration of training another row of data is added to each of $N-n_l$ matrices, and the data structure is updated. The maximum size of data stored occurs after $T$ iterations, and corresponds to $N-n_1$ matrices, each of size $TM$.  Any element of these matrices can thus be coherently accessed by qRAM in time $\polylog(TMN)$. }
\label{fig:X-matrix}
\end{figure}


\medskip With these ideas, one can define a quantum $(\epsilon,\gamma)$-feedforward algorithm which adapts the classical one to make use of RIPE, qRAM access and the implicit storage of the weight matrices. 

\begin{subroutine}(\textbf{Quantum }\textbf{$(\epsilon,\gamma)$-Feedforward})\label{sub:q-ff}\\
{\bf Inputs}: indices $t \in [T], m \in [M]$; input pair $(x^{t,m},y^{t,m})$ in qRAM; unitaries $U_{W_j^{t,l},a^{t,m,l-1}}$ for creating $\ket{W^{t,l}_j}$ and $\ket{a^{t,m,l-1}}$ in time $T_U$, estimates of their norms $\overline{\norm{W^{t,l}_j}}$ and $\overline{{\norm{a^{t,m,l-1}}}}$ to relative error at most $\xi = \epsilon/3$, and vectors $b^{t,l}$ in qRAM for $l \in [L]$; activation function $f$; accuracy parameters $\epsilon, \gamma >0$. 
\begin{enumerate}
\item For  $j=1$ to $n_1$ do:
\item \qquad $a^{t,m,1}_j=x^{t,m}_j$
\item For $l = 2$ to $L$ do:
\item \qquad For  $j=1$ to $n_L$ do:
\item \qquad \qquad Use the RIPE algorithm with unitary $U_{W_j^{t,l},a^{t,m,l-1}}$ to compute $s_j^{t,m,l}$, such that \begin{linenomath*}
\begin{equation*}
 \abs{s_j^{t,m,l} - \< W_j^{t,l}, a^{t,m,l-1}\>} \leq \max \{ \epsilon |\< W_j^{t,l}, a^{t,m,l-1}\>|, \epsilon \} \mbox{ with probability} \geq 1-\gamma   
\end{equation*}
\end{linenomath*}
\item  \qquad \qquad Compute $z_j^{t,m,l} = s_j^{t,m,l} + b_j^{t,l}$ and store $z_j^{t,m,l}$ in qRAM
\item  \qquad \qquad Compute $a_j^{t,m,l} = f(z_j^{t,m,l})$ and store $a_j^{t,m,l}$ in qRAM
\end{enumerate}
\end{subroutine}

The cost of performing this feedforward procedure is  $O\lp \sum_{l=2}^L\sum_{j=1}^{n_l}T_{RIPE}(W^{t,l}_j, a^{t,m,l-1})\rp = O\lp N\overline{T}_{RIPE}\rp $, where $T_{RIPE}(W^{t,l}_j, a^{t,m,l-1})$ is the time required to perform Robust Inner Product Estimation between vectors $W^{t,l}_j$ and $a^{t,m,l-1}$, and $\overline{T}_{RIPE}$ denotes the average time (over all layers and neurons) to perform this inner product estimation. Using the running time of RIPE that achieves the larger of the absolute or relative estimation errors, one obtains
\begin{linenomath*}
\begin{equation*}
 T_{RIPE}(W^{t,l}_j, a^{t,m,l-1}) = \widetilde{O}\lp \frac{T_U \log(1/\gamma)}{\epsilon } \frac{\norm{W^{t,l}_j}\norm{a^{t,m,l-1}} }{\max\left\{1, \abs{\ip{W^{t,l}_j}{a^{t,m,l-1}}}\right\}} \rp,   
\end{equation*}
\end{linenomath*}
where $T_U$ is the time required to prepare state $\ket{W^{t,l}_j}$ and $\ket{a^{t,m,l-1}}$. $\ket{a^{t,m,l-1}}$ can be created in time polylogarithmic in $N$ since the required data is, by assumption, stored in qRAM.  By the implicit storage of weight matrices, $\ket{W^{t,l}_j}$ can be created in time $\tilde{O}\lp \frac{\norm{X^{[t,l,j]}}_{F}}{\norm{W^{t,l}_j}}\sqrt{TM}\rp$.  The overall running time of the quantum feedforward algorithm is therefore 
\begin{linenomath*}
\begin{equation*}
 \tilde{O} \lp \sqrt{TM}N  \frac{ \log (1/\gamma)}{\epsilon}R_a^{t,m}\rp,    
\end{equation*}
\end{linenomath*}
where  $R^{t,m}_a = \frac{1}{N-n_1}\sum_{l=2}^L\sum_{j=1}^{n_l}  
\frac{\norm{X^{[t,l,j]}}_F\norm{a^{t,m,l-1}} }{\max\left\{1, \abs{\ip{W^{t,l}_j}{a^{t,m,l-1}}}\right\}}$.  

The factor $R^{t,m}_a$ does not appear in the classical algorithms, and while it is a priori not clear what impact this will have on the running time, we give evidence in the discussion section that this does not impact the running time significantly in practice. The upside is we save a factor of $O(N)$ compared with the classical case, since our algorithm depends linearly on $N$ and not on the number of edges $E$. Last, there is an overhead of $\sqrt{TM}$ which is a consequence of only saving the weight matrices implicitly. For large neural networks one expects that $N >> \sqrt{TM}$.

We can similarly define a quantum backpropagation algorithm:

\begin{subroutine}(\textbf{Quantum} $(\epsilon,\gamma)$-\textbf{Backpropagation})\label{sub:q-bp}\\

\noindent
{\bf Inputs}: indices $t \in [T], m \in [M]$; input pair $(x^{t,m},y^{t,m})$ in qRAM; vectors $a^{t,m,l}, z^{t,m,l}, l\in [L]$ in qRAM; unitaries $U_{(W^{t,l+1})^j,\delta^{t,m,l+1}}$ for creating $\ket{(W^{t,l+1})^j}$ and $\ket{\delta^{t,m,l+1}}$ in time $T_U$, estimates of their norms $\overline{\norm{(W^{t,l+1})^j}}$ and $\overline{{\norm{\delta^{t,m,l+1}}}}$ to relative error at most $\xi = \epsilon/3$, and vectors $b^{t,l}$ in qRAM for $l \in [L]$; derivative activation function $f'$; parameters $\eta^{t,l}$ for $l \in [L]$; accuracy parameters $\epsilon, \gamma >0$. 

\begin{enumerate}
\item For  $j=1$ to $n_L$ do:
\item \qquad $\delta^{t,m,L}_j= f'(z^{t,m,L}_j) \frac{\partial  C}{\partial a^L_j} $.
\item For $l = L-1$ to $1$ do:
\item \qquad For  $j=1$ to $n_l$ do:
\item \qquad \qquad Use the RIPE algorithm with unitary $U_{(W^{t,l})^j,\delta^{t,m,l+1}}$ to compute $s_j^{t,m,l}$, such that  
\begin{linenomath*}
\begin{equation*}
 \abs{s_j^{t,m,l} -  \< \lp W^{t,l+1}\rp^j, \delta^{t,m,l+1}\>} \leq \max \{ \epsilon |\< \lp W^{t,l+1}\rp^j, \delta^{t,m,l+1}\>| , \epsilon \} \mbox{ with probability } \geq 1-\gamma   
\end{equation*}
\end{linenomath*}
\item  \qquad \qquad Compute $\delta_j^{t,m,l} = f'(z^{t,m,l}_j)s_j^{t,m,l}$ and store 
$\delta_j^{t,m,l}$, 
$-\frac{\eta^{t,l}}{M} \delta_j^{t,m,l}\norm{a^{t,m,l-1}}$, and $-\frac{\eta^{\tau,l}}{M} \norm{\delta^{t,m,l}}a_j^{t,m,l-1}$
in qRAM.
\end{enumerate}
\end{subroutine}

Similar to the feedforward case, the running time of the quantum backpropagation algorithm is 
\begin{linenomath*}
\begin{equation*}
 \tilde{O} \lp  \sqrt{TM}N \frac{\log (1/\gamma)}{\epsilon}R^{t,m}_\delta\rp,   
\end{equation*}
\end{linenomath*}
where $R_\delta^{t,m} = \frac{1}{N-n_l}\sum_{l=1}^{L-1}\sum_{j=1}^{n_l} 
\frac{\norm{\tilde{X}^{[t,l+1,j]}}_F\norm{\delta^{t,m,l+1}} }{\max\left\{1,\abs{\ip{\lp W^{t,l+1}\rp^T_j}{\delta^{t,m,l+1}}} \right\}}$.


The quantum training algorithm consists of running the feedforward and the backpropagation algorithms for all inputs in a mini-batch of size $M$, and iterating this procedure $T$ times. After each mini batch has been processed, we explicitly update the biases $b$ in qRAM, but we do not explicitly update the weights $W$, since this would take time $\tilde{O}(E)$. Instead, we compute an estimate of the norm of the rows and columns of the weight matrices and keep a history of the $a$ and $\delta$ vectors in memory so that we can create the quantum states corresponding to the weights on the fly. 

\begin{subroutine}(\textbf{Quantum }$(\epsilon,\gamma)$-\textbf{Training})\label{sub:q-training}\\

\noindent
{\bf Inputs}: input pairs $(x^{t,m},y^{t,m})$ for all $t \in T, m \in [M]$, parameters $\eta^{t,l}$, for $t \in [T], l \in [L]$, and $\epsilon, \gamma >0$.
\begin{enumerate}
    \item Initialise the weights and biases $W^{1,l}, b^{1,l}$ for $l \in [L]$ with a low-rank initialization.
    \item For $t=1$ to $T$ do:
    \item \qquad For $m=1$ to $M$ do:
    \item \qquad \qquad Run the quantum $(\epsilon,\gamma)$-feedfoward algorithm. 
    \item \qquad \qquad Run the quantum $(\epsilon,\gamma)$-backpropagation algorithm. 
    \item \qquad Compute the biases and update the qRAM with
\begin{linenomath*}
\begin{align*}
b^{t+1,l}_j &= b^{t,l}_j - \eta^{t,l} \frac{1}{M}\sum_{m} \delta^{t,m,l}_j
\end{align*}
\end{linenomath*}
\item \qquad Compute the estimates of the norms $\overline{\norm{W^{t+1,l}_j}}$ and $\overline{\norm{(W^{t+1,l})^j}}$ with relative error $\xi=\epsilon/3$. 
\end{enumerate}
\end{subroutine}

\begin{thm}\label{thm:main_result}
The running time of the quantum training algorithm is 
$\tilde{O} \lp (TM)^{1.5}N  \frac{\log (1/\gamma)}{\epsilon}R \rp$, where $R = R_a+R_\delta+R_W$, $R_a = \frac{1}{TM}\sum_{t,m}R^{t,m}_a$, $R_\delta = \frac{1}{TM}\sum_{t,m}R^{t,m}_\delta$, and 
$R_W = \frac{1}{T}\sum_t (R^t_{W_r}+R^t_{W_c})$, 
with $R^t_{W_r} =  \frac{1}{M} \frac{1}{N-n_1}\sum_{l=2}^L\sum_{j=1}^{n_l} \frac{\norm{X^{[t,l,j]}}_F}{\norm{W^{t,l}_j}}$ and  $R^t_{W_c} = \frac{1}{M} \frac{1}{N-n_1}\sum_{l=2}^L\sum_{j=1}^{n_l} \frac{\norm{\tilde{X}^{[t,l,j]}}_F}{\norm{\lp W^{t,l}\rp^T_j}}$.

\end{thm}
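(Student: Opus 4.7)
The plan is to bound the total running time of Subroutine \ref{sub:q-training} by summing the costs of its constituent operations over the outer loops in $t$ and $m$, then expressing the resulting double sums in terms of the averaged quantities $R_a$, $R_\delta$, $R_W$ defined in the theorem. The key running time ingredients already established in the excerpt are: (i) one call to the quantum $(\epsilon,\gamma)$-feedforward subroutine costs $\tilde{O}\bigl(\sqrt{TM}\,N\,\frac{\log(1/\gamma)}{\epsilon}R^{t,m}_a\bigr)$; (ii) one call to the quantum $(\epsilon,\gamma)$-backpropagation subroutine costs $\tilde{O}\bigl(\sqrt{TM}\,N\,\frac{\log(1/\gamma)}{\epsilon}R^{t,m}_\delta\bigr)$; and (iii) estimating a single weight-row norm $\norm{W^{t,l}_j}$ or column norm $\norm{(W^{t,l})^T_j}$ to multiplicative error $\xi=\epsilon/3$ costs $\tilde{O}\!\left(\frac{1}{\epsilon}\frac{\norm{X^{[t,l,j]}}_F}{\norm{W^{t,l}_j}}\sqrt{TM}\right)$ (and analogously with $\tilde X^{[t,l,j]}$ for the columns), as asserted in the ``implicit storage of weight matrices'' paragraph.

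First I would handle the feedforward and backpropagation contributions. Summing (i) over the $TM$ inner iterations produces a total cost $\tilde O\bigl(\sqrt{TM}\,N\,\frac{\log(1/\gamma)}{\epsilon}\sum_{t,m} R^{t,m}_a\bigr) = \tilde O\bigl((TM)^{1.5}N\,\frac{\log(1/\gamma)}{\epsilon}R_a\bigr)$, using $R_a=\frac{1}{TM}\sum_{t,m}R^{t,m}_a$; the analogous calculation with $R^{t,m}_\delta$ gives the term involving $R_\delta$.

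Next I would account for the per-epoch overhead in step 6--7 of Subroutine \ref{sub:q-training}. The bias update in step 6 is purely classical and costs $O(MN)$ per $t$, for a total $O(TMN)$, which is dominated by the terms above. The norm estimation in step 7, carried out $T$ times for each of the $N-n_1$ rows and $N-n_l$ columns, contributes
\begin{equation*}
\sum_{t=1}^{T}\sum_{l=2}^{L}\sum_{j=1}^{n_l} \tilde O\!\left(\frac{\sqrt{TM}}{\epsilon}\,\frac{\norm{X^{[t,l,j]}}_F}{\norm{W^{t,l}_j}}\right) \;+\; \text{(column analogue with } \tilde X^{[t,l,j]} \text{)}.
\end{equation*}
Rearranging the row sum and using the definition $R^{t}_{W_r} = \frac{1}{M(N-n_1)}\sum_{l,j}\frac{\norm{X^{[t,l,j]}}_F}{\norm{W^{t,l}_j}}$ shows that $\sum_{t,l,j}\frac{\norm{X^{[t,l,j]}}_F}{\norm{W^{t,l}_j}} = TM(N-n_1)\,R_{W_r}$ where $R_{W_r}=\frac{1}{T}\sum_t R^t_{W_r}$, so the row part becomes $\tilde O\bigl((TM)^{1.5}N\,\frac{1}{\epsilon}\,R_{W_r}\bigr)$. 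The column part is identical after replacing $R_{W_r}$ by $R_{W_c}$.

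Finally I would combine the three contributions to obtain a total of $\tilde O\bigl((TM)^{1.5}N\,\frac{\log(1/\gamma)}{\epsilon}\bigl(R_a+R_\delta+R_{W_r}+R_{W_c}\bigr)\bigr)=\tilde O\bigl((TM)^{1.5}N\,\frac{\log(1/\gamma)}{\epsilon}R\bigr)$, which is the claimed bound. The only nontrivial bookkeeping---and the place where it is easy to slip---is matching the normalization factors $1/T$ and $1/M$ in the definitions of $R_{W_r}$ and $R_{W_c}$ against the triple sum from the norm-estimation step, so that the correct power $(TM)^{1.5}$ rather than $T^{1.5}M^{1/2}$ emerges; verifying this index accounting is the main subtlety of the argument.
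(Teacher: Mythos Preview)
Your proposal is correct and follows essentially the same approach as the paper's own proof: decompose the total cost of Subroutine~\ref{sub:q-training} into the feedforward, backpropagation, and norm-estimation contributions, sum over the $t,m$ loops, and rewrite the resulting sums in terms of the averaged quantities $R_a$, $R_\delta$, $R_W$. Your write-up is in fact more explicit than the paper's (which is quite terse), particularly in spelling out the index accounting that converts the triple sum for norm estimation into $(TM)^{1.5}N\cdot R_W/\epsilon$.
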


\begin{proof}

The terms $R_a$ and $R_\delta$ come from the feedforward and backpropagation subroutines. The $R_W$ term comes from the estimation of the norms $\norm{W^{t,l}_j}$ and $\norm{\lp W^{t,l}\rp^T_j}$ which only happens once for each mini-batch. For a given $t,l,j$ the estimation of the norm $\overline{\norm{W^{t+1,l}_j}}$ takes time $O(T_W / \xi)$, with $T_W= \tilde{O}\lp \frac{\norm{X^{[t,l,j]}}_{F}}{\norm{W^{t,l}_j}}\sqrt{TM}\rp$, and we take $\xi = \epsilon/3$. 
 Hence, we get the ratio $R^t_{W_r} =  \frac{1}{M} \frac{1}{N-n_1}\sum_{l=2}^L\sum_{j=1}^{n_l} \frac{\norm{X^{[t,l,j]}}_F}{\norm{W^{t,l}_j}}$ and similarly for the estimation of the columns $R^t_{W_c} = \frac{1}{ M} \frac{1}{N-n_1}\sum_{l=2}^L\sum_{j=1}^{n_l} \frac{\norm{\tilde{X}^{[t,l,j]}}_F}{\norm{\lp W^{t,l}\rp^T_j}}$.
Then we can define $R_W = \frac{1}{T}\sum_t (R^t_{W_r}+R^t_{W_c})$ as needed. 

\end{proof}

The trade-off in avoiding an $O\lp E\rp$ scaling is an extra cost factor of $O(\sqrt{TM})$, which makes the overall running time of our quantum training algorithm essentially $\tilde{O}\lp (TM)^{1.5}N\rp$ and, while the exact running time also depends on various other factors, loosely speaking has an advantage over the classical training when $\sqrt{TM} \ll N$.

\subsection{Quantum Evaluation}

Once a neural network has been trained, it can be evaluated on new data to output a predicted label.  While the initial training may only occur once, evaluation of new data labels may occur thousands or millions of times thereafter. There are thus large gains to be had from even small improvements to the efficiency of network evaluation, and we realize such an improvement with our second quantum algorithm.  

The quantum procedure for neural network evaluation is essentially the same as the quantum feedforward algorithm. Assume there is a new input pair $(x,y)$ that we want to evaluate, and that we have unitaries $U_{W_j^{l},a^{l-1}}$ for creating $\ket{W^{l}_j}$ and $\ket{a^{l-1}}$ in time $T_U$, estimates of their norms $\overline{\norm{W^{l}_j}}$ and $\overline{{\norm{a^{l-1}}}}$ to relative error at most $\xi = \epsilon/3$, and vectors $b^{l}$ in qRAM for $l \in [L]$. The running time of the quantum feedforward subroutine implies the following.

\begin{thm}\label{thm2}
There exists a quantum algorithm for evaluating an $(\epsilon,\gamma)$-feedforward neural network in time $\tilde{O}\lp T_U N \frac{\log(1/\gamma)}{\epsilon}R_e\rp$,  where $R_e = \frac{1}{N-n_1}\sum_{l=2}^L\sum_{j=1}^{n_l} 
\frac{\norm{W^{l}_j} \norm{a^{l-1}}}{\max \left\{ 1 , \abs{\ip{ W^{l}_j}{a^{l-1}}}   \right\} }$ and $T_U$ is the time required to prepare any of the states $\ket{W^l_j}$ and $\ket{a^{l-1}}$. For a neural network whose weights are already explictly stored in an $\ell_2$-BST, $T_U$ is polylogarithmic in $E$, and the total running time is $\tilde{O}\lp N\frac{\log(1/\gamma)}{\epsilon}R_e\rp$.
\end{thm}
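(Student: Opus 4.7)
My plan is to observe that network evaluation is essentially a single execution of the quantum $(\epsilon,\gamma)$-feedforward subroutine already defined (Subroutine 1), applied to the fresh input $(x,y)$ with the training-iteration and mini-batch indices dropped. The running-time analysis carried out for Subroutine 1 can therefore be lifted directly, with two simplifications: the $\sqrt{TM}$ factor from the training case (which came from creating $\ket{W^{t,l}_j}$ implicitly by querying a history of $a$ and $\delta$ vectors stored in the matrices $X^{[t,l,j]}$) disappears here because the weights are now static inputs with prescribed preparation unitaries of cost $T_U$, and there is no averaging over $(t,m)$ since the algorithm runs only once.

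Concretely, starting from $a^1 = x$, at each layer $l \geq 2$ and each neuron $j \in [n_l]$ I invoke RIPE with the supplied unitaries $U_{W^l_j,a^{l-1}}$ and the supplied $\xi = \epsilon/3$ accurate norm estimates to produce $s^l_j$ meeting the $(\epsilon,\gamma)$ guarantee, set $z^l_j = s^l_j + b^l_j$, apply $f$, and write $a^l_j$ into the $\ell_2$-BST so that $\ket{a^l}$ can be generated for the next layer in time polylogarithmic in $N$. Summing the individual RIPE costs, using the cost bound that attains the larger of the absolute and relative errors, and invoking $\sum_{l=2}^L n_l = N - n_1 = O(N)$,
\begin{equation*}
T_{\text{eval}} \;=\; \sum_{l=2}^{L}\sum_{j=1}^{n_l} T_{\text{RIPE}}(W^l_j, a^{l-1}) \;=\; \tilde{O}\!\left(T_U\, N\, \frac{\log(1/\gamma)}{\epsilon}\, R_e\right),
\end{equation*}
where the ratio $R_e$ is precisely the normalised sum in the theorem statement, obtained by factoring the common $T_U\log(1/\gamma)/\epsilon$ out of the RIPE running time.

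For the final clause, when each row $W^l_j$ is already stored in the $\ell_2$-BST data structure, the construction of \cite{kerenidis2017quantum} gives state preparation $\ket{W^l_j}$ in time $\polylog(E)$ together with an exact value of $\norm{W^l_j}$ read off the root of its subtree; the same holds for the $\ket{a^{l-1}}$ vectors we write along the way. Since $E \leq N^2$, we have $T_U = \polylog(N)$, which is absorbed into $\tilde{O}(\cdot)$ and yields the second stated bound. The only point requiring attention--rather than being a true obstacle--is verifying that the $\epsilon/3$-accurate norm estimates demanded by RIPE remain available throughout; this follows immediately from the $\ell_2$-BST structure, and no cross-layer error propagation analysis is needed because the RIPE guarantee is applied independently and with freshly prepared input states at each neuron.
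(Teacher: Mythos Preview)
Your proposal is correct and takes essentially the same approach as the paper: the paper does not give a separate proof of this theorem but simply states that it follows from the running-time analysis of the quantum $(\epsilon,\gamma)$-feedforward subroutine, which is exactly the reduction you spell out. Your additional justification that $\polylog(E)$ is absorbed into $\tilde{O}(\cdot)$ via $E \le N^2$ is a reasonable way to make the final clause precise.
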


 In contrast, the classical evaluation algorithm requires running time $O(E)$.  For a network with parameters trained via our quantum training algorithm, we store the network weight matrices $W$ in memory only implicitly, which leads to a time $T_U$ scaling as $O(\sqrt{TM})$.  In this case the overall running time equates to $\tilde{O}\lp \sqrt{TM} N \frac{\log(1/\gamma)}{\epsilon}R_e\rp$.

\subsection{Simulation}\label{s:numerics}

In this section we show that $(\epsilon,\gamma)$-networks can achieve comparable results to standard neural networks, and give numerical evidence that the hard-to-estimate parameters $R_a$, $R_\delta$ and $R_W$ may not be large for problems of practical interest.  We give further arguments to support this in the Discussion section.

We classically simulate the training and evaluation of $(\epsilon,\gamma)$-feedforward neural networks on the MNIST handwritten digits data set consisting of 60,000 training examples and 10,000 test examples.  A network with $L=4$ layers and dimensions $[n_1,n_2,n_3,n_4] =[784,100,30,10]$ was used, with $M = 100, T = 7500, \eta = 0.05$, tanh activation function and mean squared error cost function $C = \frac{1}{2M}\sum_{m} \norm{y^{t,m}-a^{t,m,L}}^2$. For this network size we have $N=924$, $E=81,700$. Our results are summarized in Table \ref{tb:standard-low-rank}.  `Standard' weight initialization refers to drawing the initial weight matrix elements from appropriately normalized Gaussian distributions $W^{l}_{jk}\sim\+N\lp 0,\frac{1}{\sqrt{n_{l-1}}}\rp$ \cite{glorot2010understanding}, and `Low rank' refer to weight initialization as described in Section \ref{ss:weights}, with rank $r=6$. In all cases, Gaussian noise drawn from $N(\epsilon/2,0)$ was added to each inner product evaluated throughout the network. Note that no cost function regularization was used to improve the network performance, and as $\epsilon$ was varied the other network hyper-parameters were not re-optimized. We find that for $\gamma = 0.05$ and various values of $\epsilon$, the network achieves high accuracy whilst incurring only modest contributions to the running time from the quantum-related terms $R_a, R_\delta$ and $R_W$. For the $\epsilon = 0.3$ case we calculate the values of $R_a$, $R_\delta$ and the two components of $R_W$, as a function of the number of gradient update steps $t$.  These results appear in Fig. \ref{fig:all}, where we see that these values quickly stabilize to small constants during the training procedure.


\begin{table}
\begin{tabular}{c|cccc}
     $\epsilon$ & 0  & 0.1  &  0.3  & 0.5 \\ \hline
     Standard & 96.9\% & 97.1\% & 96.9\% & 96.2\%\\ \hline
     Low rank & - & 96.8\% &  96.8\% &  96.3\% \\
     $\log(1/\gamma)/\epsilon$ & - &   30  & 10  & 6 \\
     $R_a$ &- & 24.4 & 21.9 &  21.3\\
     $R_\delta$ & -& 0.6 &  0.1 &   1.3\\
     $R_W$ &- & 0.1 &  0.1 &  0.1
\end{tabular}
\caption{$(\epsilon,\gamma)$-feedforward neural network accuracy (percentage of correctly labelled test points) on the MNIST handwriting data set, for various values of $\epsilon$, and $\gamma = 0.05$. A network with $L=4$ layers and dimensions $[n_1,n_2,n_3,n_4] =[784,100,30,10]$ was used, with $M = 100, T = 7500, \eta = 0.05$, tanh activation function and mean squared error cost function.}\label{tb:standard-low-rank} 
\end{table}

While the MNIST data set is evaluated using Gaussian distributed noise, we also numerically evaluate the performance of $(\epsilon,\gamma)$-feedforward networks on the Iris flower data set using both Gaussian noise and, for comparison, noise corresponding to the quantum RIPE subroutine where the norms of all vectors are known exactly (see Supplemental Material).  This data set---consisting of 120 training examples and 30 test examples, with each data point corresponding to a length $4$ real vector and having one of three labels---is small enough that classically sampling from the RIPE distribution (which corresponds to simulating the output of a quantum circuit) during the network training can be performed efficiently on a standard desktop computer. A network with $L=3$ layers, dimension $[n_1,n_2,n_3] = [4,10,3]$, tanh activation function and mean squared error cost function was chosen, with $M = 10$, $T=1200$ and $\eta = 0.07$.  For each $(\epsilon,\gamma)$ pair we repeated the network training and evaluation 5 times and show the average number of correctly labelled test points  in Table \ref{tb:iris}. For $\gamma = 0.05$ and the same choices of $\epsilon$ as in the MNIST case, we find that the choice of noise distribution, Gaussian or RIPE, does not lead to significant difference in the network performance.   

\begin{table}
\begin{tabular}{c|cccc}
     $\epsilon$ & 0  & 0.1  &  0.3  & 0.5 \\ \hline
     Gaussian & 28.6 & 29.8 & 29.2 & 29.0\\ \hline
     RIPE & - & 28.6 & 28.4 & 28.0 \\
     $\log(1/\gamma)/\epsilon$ & - &   30  & 10  & 6 \\
     $R_a$ &- & 6.0 & 4.4 &  4.0\\
     $R_\delta$ & -& 0.7 &  0.6 &   0.6\\
     $R_W$ &- & 0.4 &  0.4 &  0.4
\end{tabular}
\caption{Average number of correctly labelled test points (out of a total of $30$) for the Iris flower data set, for Gaussian distributed noise as well as noise from the RIPE distribution.  For the case of RIPE, the $R_a, R_\delta$ and $R_W$ terms presented are also averaged over $5$ network evaluations. $(\epsilon,\gamma)$-feedforward neural network accuracy (percentage of correctly labelled test points) were trained on the Iris  data set, for various values of $\epsilon$, and $\gamma = 0.05$. A network with $L=3$ layers and dimensions $[n_1,n_2,n_3] =[4,10,3]$ was used, with $M = 10, T = 1200, \eta = 0.07$, tanh activation function and mean squared error cost function.}\label{tb:iris} 
\end{table}


\subsection{Quantum-inspired classical algorithms}

Our quantum training and evaluation algorithms rely on the $\ell_2$-BST data structure in order to efficiently estimate inner products via the RIPE procedure.  By a result of Tang, such a data structure in fact allows inner products to be efficiently estimated classically as well:

\medskip\textbf{Classical $\ell_2$-BST-based Inner Product Estimation \cite{tang2018quantum}}.  If $x,y\in\mb{R}^n$ are stored in $\ell_2$-BSTs then, with probability at least $1-\gamma$, a value $s$ can be computed satisfying
\begin{linenomath*}
\begin{equation*}
  \abs{s-\< x,y\>} \leq 
  \begin{cases}
   \epsilon  \abs{\< x,y\>} & \text{in time }\widetilde{O}\left(\frac{ \log(1/\gamma)}{ \epsilon^2}\frac{\norm{x}^2\norm{y}^2}{ \abs{\ip{x}{y}}}\right)\\
   \epsilon & \text{in time }\widetilde{O}\left(\frac{ \log(1/\gamma)}{ \epsilon^2} \norm{x}^2\norm{y}^2\right)
  \end{cases}
\end{equation*}
\end{linenomath*}

We can use this concept to derive quantum-inspired classical analogues of our algorithms. If the network weights are explicitly stored in $\ell_2$-BSTs, then the quantum RIPE procedure can be directly replaced by this classical inner product estimation routine to give a classical algorithm for network evaluation which has running time
\begin{linenomath*}
\begin{equation*}
    \tilde{O}\lp N\frac{\log(1/\gamma)}{\epsilon^2}R^{cl}_e\rp, 
\end{equation*}
\end{linenomath*}
where $R^{cl}_e = \frac{1}{N-n_1}\sum_{l=2}^L\sum_{j=1}^{n_l} 
\frac{\norm{W^{l}_j}^2 \norm{a^{l-1}}^2}{\max \left\{ 1 , \abs{\ip{ W^{l}_j}{a^{l-1}}}^2   \right\} }\ge R_e^2$.

Dequantizing the quantum training algorithm is slightly more complicated as the network weights are only stored implicitly.  Nonetheless, the required inner products can still be estimated efficiently, and in the Methods section we show a quantum-inspired algorithm can be given which runs in time
\begin{linenomath*}
\begin{equation*}
\tilde{O} \lp (TM)^{2}N  \frac{\log (1/\gamma)}{\epsilon^2}\lp R_a^{cl} + R_\delta^{cl}\rp \rp    
\end{equation*}
\end{linenomath*}
where $R_{a}^{cl}\ge R_a^2$ and $R_\delta^{cl} \ge R_\delta^2$.  

The quantum-inspired algorithm is thus slower than the quantum one by a factor of $\frac{\sqrt{TM}}{\epsilon}\frac{R_a^2 + R_\delta^2}{ \lp R_a + R_\delta + R_W\rp}$.  We analyze their relative performance further in the Discussion section.

\section{Discussion}\label{sec:discussion}
We have presented quantum algorithms for training and evaluating $(\epsilon,\gamma)$-feedforward neural networks which take time 
$\tilde{O} \lp (TM)^{1.5}N  \frac{\log (1/\gamma)}{\epsilon}R \rp$ and $\tilde{O}(NT_U\frac{\log (1/\gamma)}{\epsilon}R_e)$ respectively.  These are the first algorithms based on the canonical classical feedforward and backpropagation procedures with running times better than the number of inter-neuron connections, opening the way to utilizing larger sized neural networks.  Furthermore, our algorithms can be used as the basis for quantum-inspired classical algorithms which have the same dependence on the network dimensions, but a quadratic penalty in other parameters.  Let us now make a number of remarks on our design choices and the performance of our algorithms.

\medskip\textbf{Performance of $(\epsilon,\gamma)$-feedforward neural networks.} While the notion of introducing noise in the inner product calculations of an $(\epsilon,\gamma)$-feedforward neural network is, as previously mentioned, similar in spirit to known classical techniques for improving neural network performance, there are certain differences. Classically, dropout and multiplicative Gaussian noise \cite{srivastava2014dropout} involve introducing perturbations post-activation: $a^l_j \ra a^l_j r_j$, where $r_j\sim N(1,1)$ (multiplicative Guassian noise) or $r_j\sim Bernoulli(p)$ (dropout), whereas in our case the noise due to inner product estimation occurs pre-activation. Furthermore, classical methods are typically employed during the training phase but, once the network parameters have been trained, new points are evaluated without the introduction of noise.  This is in contrast to our quantum algorithm where the evaluation of new points inherently also involves errors in inner product estimation.  However, numerical simulation (see Tables \ref{tb:standard-low-rank} and \ref{tb:iris}) of the noise model present in our quantum algorithm shows that $(\epsilon,\gamma)$-networks may tolerate modest values of noise, and values of $\epsilon$ and $\gamma$ can be chosen for which the network performance does not suffer significantly, whilst at the same time do not contribute greatly to the factor of $\log(1/\gamma)/\epsilon$ that appears in the quantum running time.

\medskip\textbf{Low rank initialization.} One assumption we make in our quantum algorithm is a low-rank initialization of the network weight matrices, compared with freedom to choose full-rank weights classically. This assumption is made in order to avoid a time of $\tilde{O}(E)$ to input the initial weight values into qRAM.  Low rank approximations to network weights have found applications classically in both speeding up testing of trained networks \cite{sainath2013low, denton2014exploiting, yu2017compressing} as well as in network training \cite{tai2015convolutional}, in some cases delivering significant speedups without sacrificing much accuracy. We find numerically that the low-rank initialization we require for our quantum algorithm works as well as full rank for a range of $\epsilon$ values (see Table \ref{tb:standard-low-rank}). 

\medskip\textbf{Quantum training running time.} Compared with the classical running time of $O(TME)$, our quantum algorithm scales with the number of neurons in the network as opposed to the number of edges.  However, this comes at the cost of a square root penalty in the number of iterations and mini-batch size, and it is an open question to see how to remove this term. The quantum algorithm also has additional factors of $\log(1/\gamma)/\epsilon$ and $R = R_a + R_\delta + R_W$ which do not feature in the classical algorithm.  While the $\epsilon$ and $\gamma$ can be viewed as hyperparameters, which can be freely chosen, the impact of the $R$ terms warrant further discussion.  

The ratio $\norm{X^{[t,l,j]}}/\norm{W^{t,l}_j}$ appears in both $R_a$ and $R_W$ (in the $R^t_{W_r}$ contribution) and 
similarly the ratio $\norm{\tilde{X}^{[t,l,j]}}/\norm{\lp W^{t,l}\rp^T_j}$ appears in $R_\delta$ and the $R^t_{W_r}$ contribution to $R_W$. 
While exact values may be difficult to predict, one can expect the following large $t$ behaviour: Classically, initial weight matrices are typically chosen so that the entries are drawn from a normal distribution with standard deviation $1/\sqrt{\text{row length}}$, which would give $\norm{W^{1,l}_j}\approx 1$, and a similar scenario can hold with low rank initialization.  As the weights are updated according to equation \eqref{eq:w_update}, and since changes in individual matrix elements may be positive or negative, for a constant step size $\eta$ one expects $\norm{W^{t,l}_j}$ to roughly grow proportionally to $\sqrt{t\eta}$. The norm $\norm{X^{[t,l,j]}}$ has value 
\begin{linenomath*}
\begin{equation*}
 \norm{X^{[t,l,j]}}_{F} = \sqrt{\sum_{\tau=0}^{t-1}\sum_{\mu=1}^M \lp \frac{\eta}{M} \delta_j^{\tau,\mu,l}\norm{a^{\tau,\mu,l-1}}\rp^2}   
\end{equation*}
\end{linenomath*}
 and, as $t$ increases and the network becomes close to well trained, we expect $\delta^{\tau,\mu,l}_j \rightarrow 0$, whereas for activations bounded in the range $[-1,1]$, as is the case for the tanh function,  $\norm{a^{t,l-1}} \le \sqrt{n_{l-1}}$. We thus expect $\norm{X^{[t,l,j]}}$ to saturate for large $t$ and not grow in an unbounded fashion. Fig. \ref{fig:all}(a) showing the time averaged values of $R^{t}_{W_r}$ and $R^t_{W_c}$ for an $(\epsilon,\gamma)$-network trained on the MNIST data set is consistent with these ratios saturating to very small values over time, in fact values less than $ 0.014$. A similar result can be seen in Fig. \ref{fig:iris-all}(a) for the IRIS data set.
 

The term $R_a = \frac{1}{TM}\sum_{t,m}R_a^{t,m}$ is an average over iterations and mini-batch elements of terms $R_a^{t,m}$, which themselves are averages over neurons in the network of ratios and products of matrix and vector norms:
\begin{linenomath*}
\begin{align*}
    R^{t,m}_a 
&= \frac{1}{N-n_1}\sum_{l=2}^L\sum_{j=1}^{n_l} 
\frac{ \norm{X^{[t,l,j]}} \norm{a^{t,m,l-1}}    } { \max\left\{1, \abs{\ip{W^{t,l}_j}{a^{t,m,l-1}}}\right\}}
\end{align*}
\end{linenomath*}
As discussed, we expect $\norm{X^{[t,l,j]}}$ to saturate for large $t$, and for activations in $[-1,1]$ we have $\norm{a^{t,m,l-1}} \le \sqrt{n_{l-1}}$.  However, as the network becomes well trained, one expects the inner products $\ip{W^{t,l}_j}{a^{t,m,l-1}}$ to become large in magnitude so that the neurons have post-activation values close to $\pm 1$.  It is thus reasonable to expect the $R_a$ to saturate or even decline for large $t$.  This is consistent with our results in Fig. \ref{fig:all}(b) and \ref{fig:iris-all}(b).  One expects similar large $t$ behaviour for $R_\delta$, except intuitively $R_\delta$ should be much smaller than $R_a$ since $\norm{\delta^{t,m,l}}$ should become very small as the network becomes well trained.  Figs. \ref{fig:all}(c) and \ref{fig:iris-all}(c) display this expected behaviour. While these simulation results are already promising, we expect the quantum advantage in the running time to become more prominent for larger size neural networks.

\medskip\textbf{Quantum evaluation running time.} The classical running time of the evaluation algorithm is $O(E)$.  In contrast, our quantum evaluation algorithm runs in time 
$\tilde{O}\lp N  \frac{\log(1/\gamma)}{\epsilon}R_e\rp$ if the entries of the weight matrices are explicitly stored in qRAM.  By the same arguments given above for the training algorithm, we expect the penalty term $R_e$ to be small for well-trained networks, and in this case we expect the quantum training algorithm to be able to provide a significant speed-up over its classical counterpart. 

\medskip\textbf{Quantum vs. quantum-inspired running times.}
While both the quantum and quantum-inspired classical algorithms presented here have a running time linear in $N$, the quantum-inspired algorithms come with a quadratic overhead in other parameters.  A comparison of their running times is given in Table \ref{tb:quantum-inspired}.   The standard classical training time of $TME$ is outperformed by our  quantum training algorithm when $\sqrt{TM}\frac{N}{E}\frac{\log(1/\gamma)}{\epsilon }\lp R_a + R_\delta + R_W\rp \ll 1$, whereas the quantum-inspired classical training algorithms can only outperform the standard classical algorithm when $TM\frac{N}{E}\frac{\log(1/\gamma)}{\epsilon^2 }\lp R_a^{cl} + R_\delta^{cl}\rp \ll 1$. To understand the significance of the quadratic overheads required by the quantum-inspired algorithms, consider a back-of-the-envelope calculation based on the landmark neural network of Krizhevsky, Sutskever and Hinton \cite{krizhevsky2012imagenet}.  The convolutional neural network they use to recognize the ImageNet LSVRC data set has $TM \approx 10^8$, and fully connected final layers corresponding to $N \approx 2\times 10^4, E \approx 6\times 10^7$.  Taking $\epsilon =0.1, \gamma = 0.05$ and assuming values of $R_a, R_\delta$ and $R_W$ the same order of magnitude as we numerically evaluated for the MNIST case gives $\sqrt{TM}\frac{N}{E}\frac{\log(1/\gamma)}{\epsilon }\lp R_a + R_\delta + R_W\rp \approx 10^3$ and  $TM\frac{N}{E}\frac{\log(1/\gamma)}{\epsilon^2 }\lp R_a^{cl} + R_\delta^{cl}\rp \approx 1.4\times 10^9$, a factor of $10^6$ in favor of the quantum algorithm. While neither the quantum nor the quantum-inspired classical algorithms can compete with the standard classical algorithm in this particular case, plausible changes to the network and data set parameters can be chosen where the quantum training algorithm has an advantage. The quantum-inspired algorithm though would require network parameters many orders of magnitude different to the practical example considered here. In fact, it remains to be seen if there are any real cases where quantum-inspired algorithms can be better in practice than the standard classical ones, and the evidence so far is negative \cite{arrazola2019quantum}.

\begin{table}
    \begin{tabular}{c|c|c}
         & Training & Evaluation  \\ \hline
        Quantum &$\tilde{O} \lp (TM)^{1.5}N  \frac{\log (1/\gamma)}{\epsilon}\lp R_a+R_\delta+R_W\rp \rp$ & $\tilde{O}\lp N\frac{\log(1/\gamma)}{\epsilon}R_e\rp$\\
        Quantum-inspired & $\tilde{O} \lp (TM)^{2}N  \frac{\log (1/\gamma)}{\epsilon^2}\lp R_a^{cl} + R_\delta^{cl}\rp \rp$ & $\tilde{O}\lp N\frac{\log(1/\gamma)}{\epsilon^2}R^{cl}_e\rp$ \\
        Standard classical & $O(TME)$ & $O(E)$
    \end{tabular}
    \caption{Comparison of running times between the quantum, quantum-inspired and standard classical algorithms.  Note that $R_a^{cl}\ge R_a^2$, $R_\delta^{cl}\ge R_\delta^2$ and $R_e^{cl}\ge R_e^2$.}\label{tb:quantum-inspired}
\end{table}

\medskip
Let us add a final remark. In our training algorithm we used classical inputs and showed that the number of iterations required for convergence is similar to the case of classical robust training. One can also consider using superpositions of classical inputs for the training, which could conceivably  reduce the number of iterations or size of mini-batch required. We leave this as an interesting open direction for future work. 


\begin{figure}%
\includegraphics[height=4.0cm]{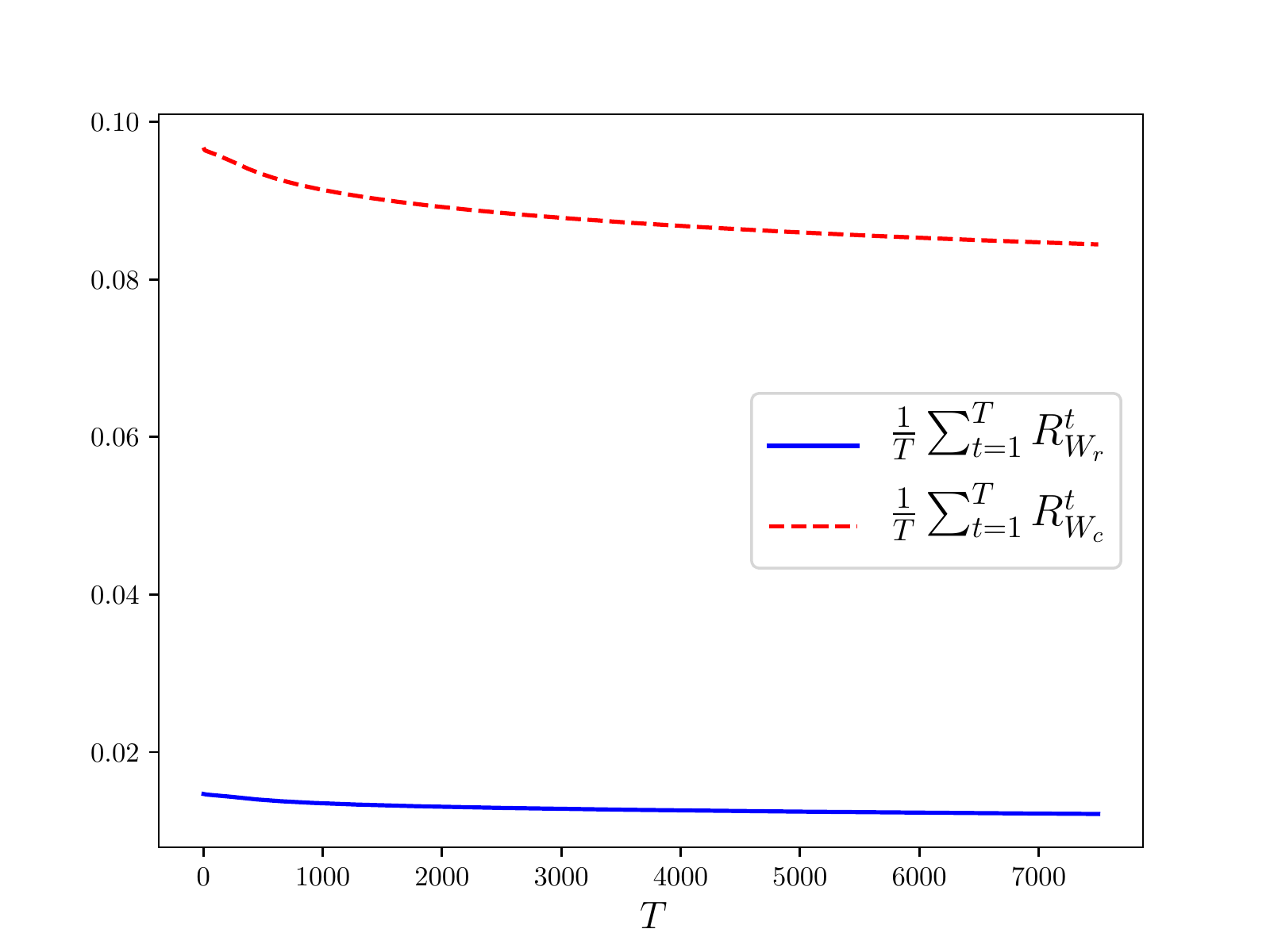}
\includegraphics[height=4.0cm]{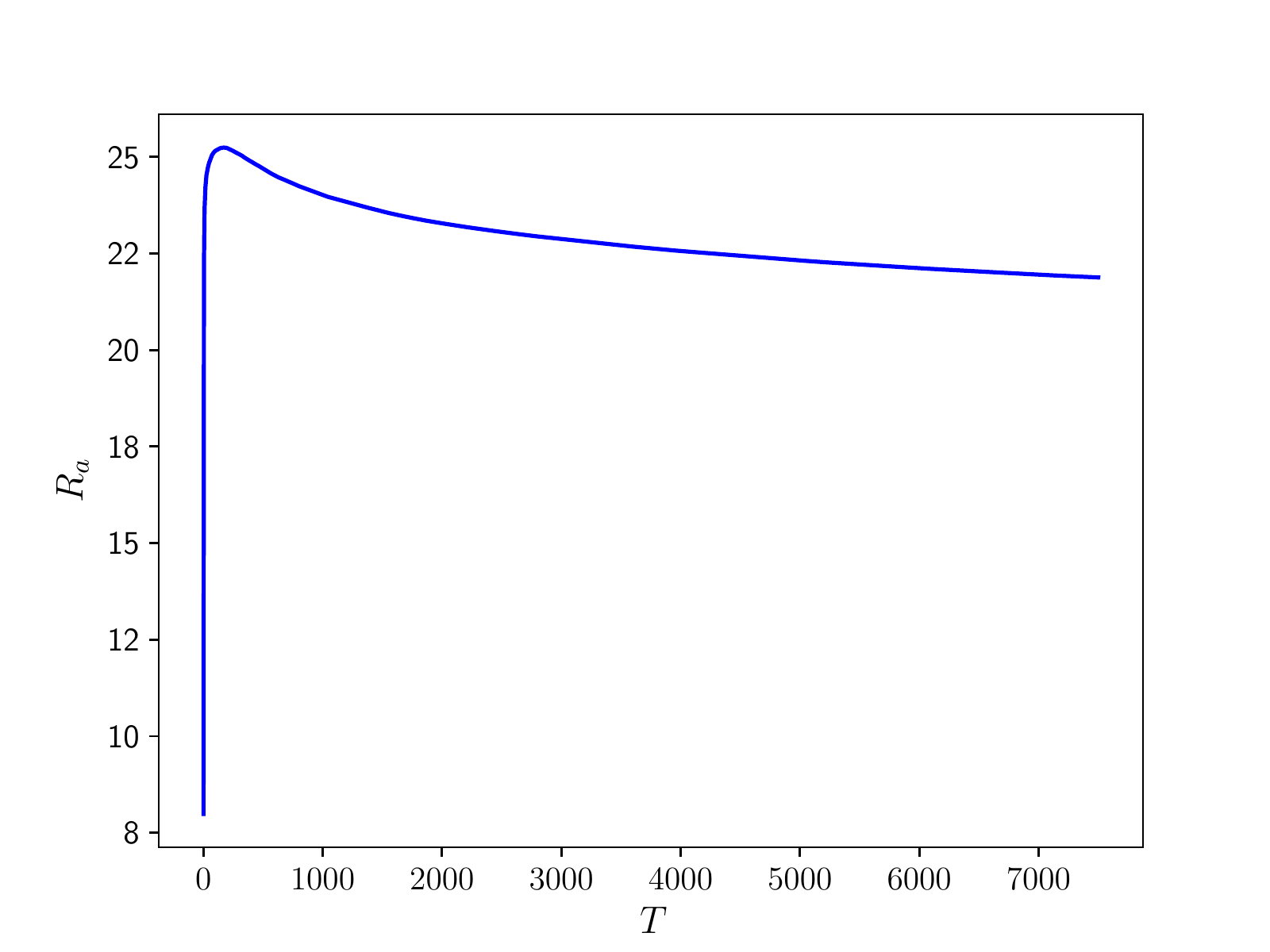}
\includegraphics[height=4.0cm]{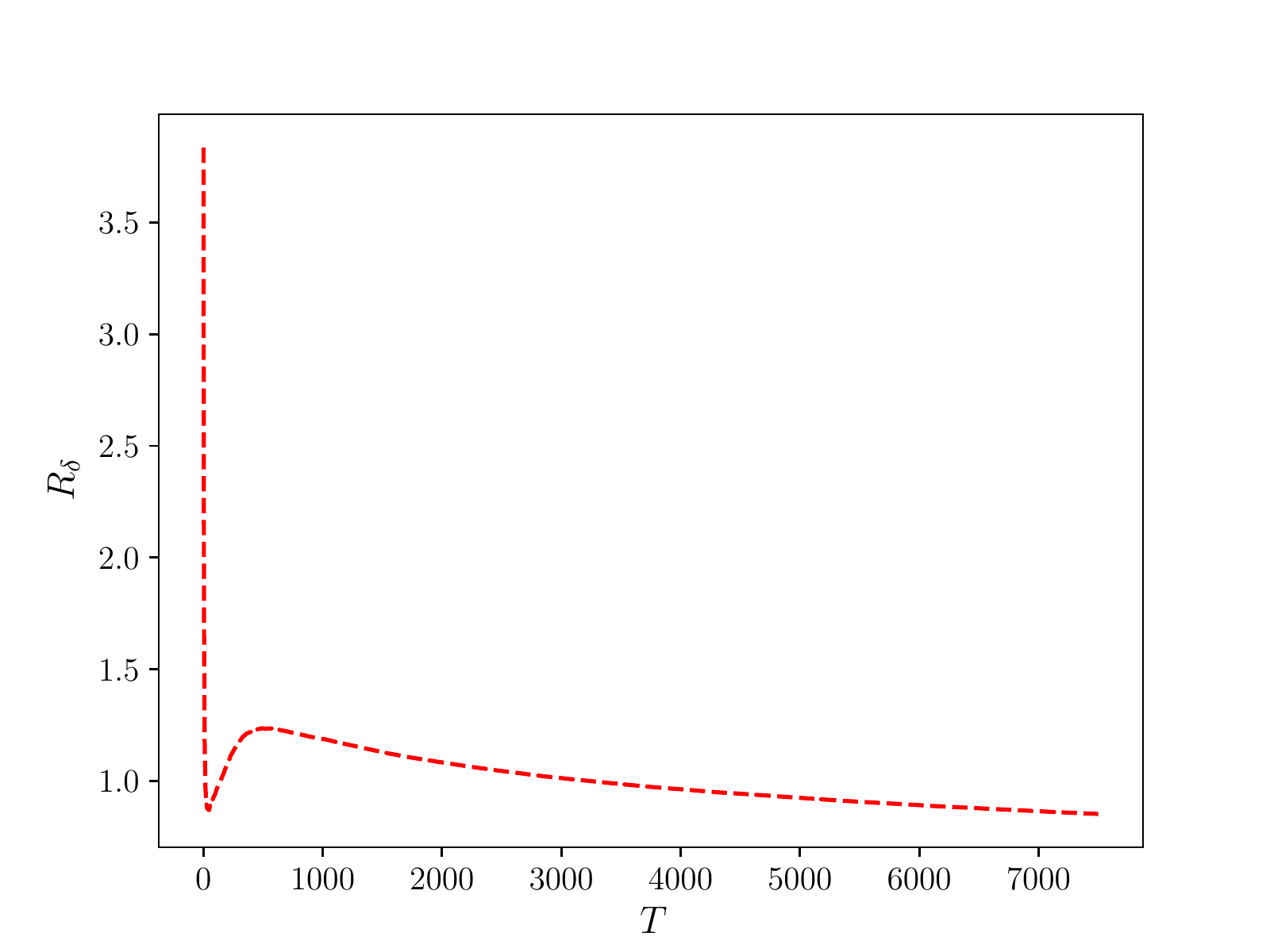}
\caption{Evolution of the terms $R_{W}$, $R_a$, and $R_\delta$ as a function of the number of iterations $T$, for the $(\epsilon,\gamma) = (0.3,0.05)$ network of Table \ref{tb:standard-low-rank} (MNIST data set). }
\label{fig:all}%
\end{figure}

\begin{figure}%
\includegraphics[height=4.0cm]{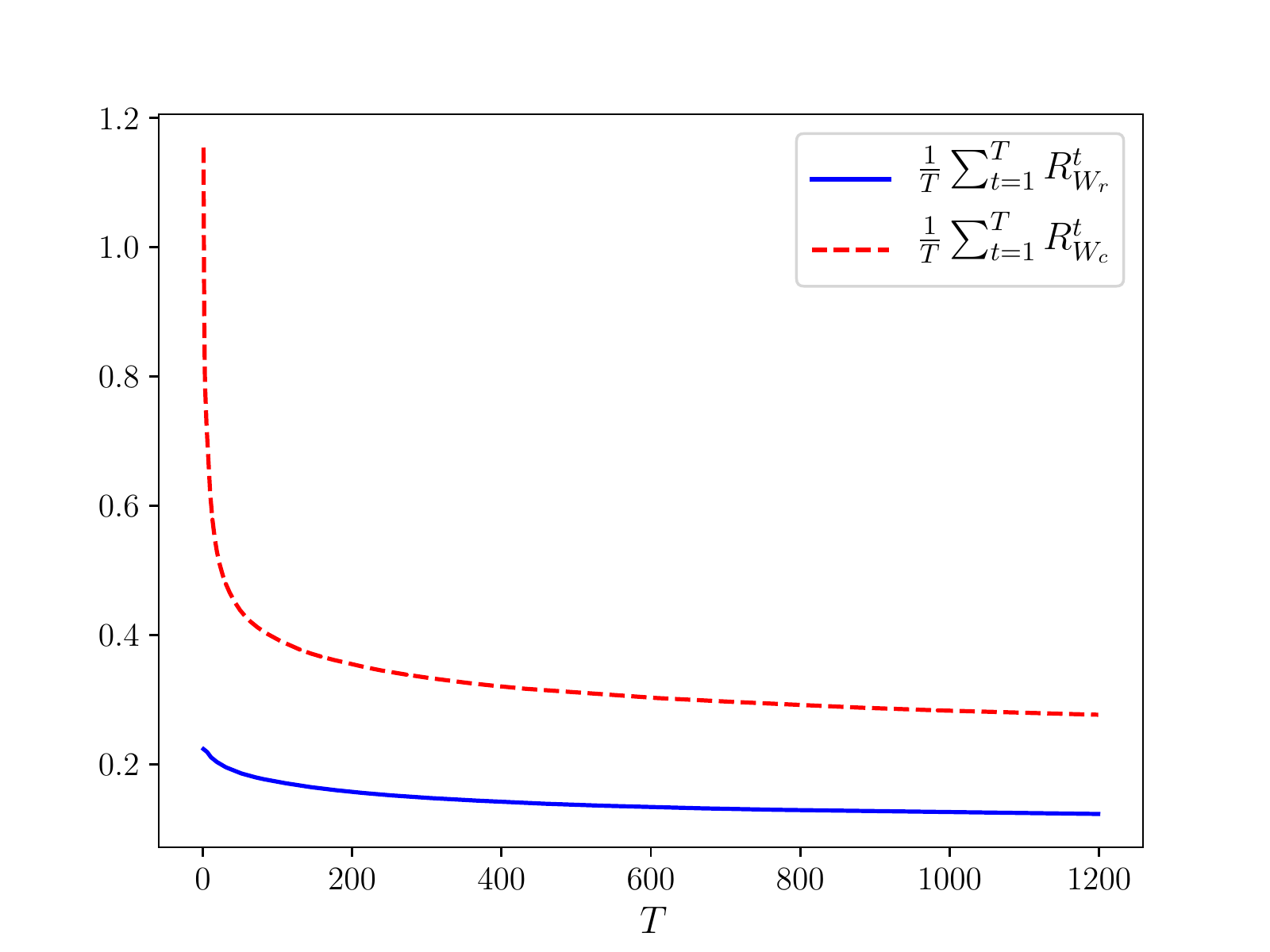}
\includegraphics[height=4.0cm]{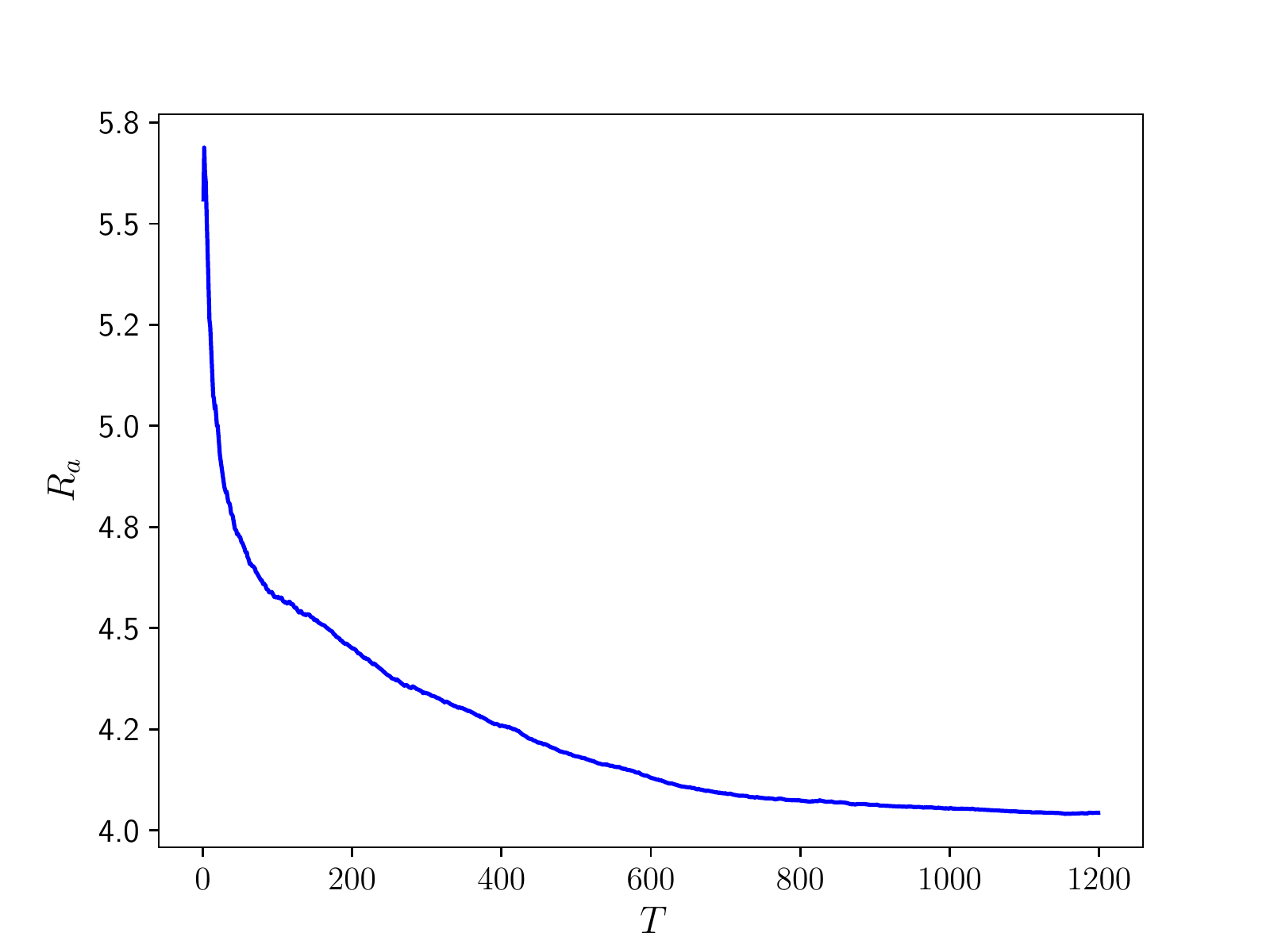}
\includegraphics[height=4.0cm]{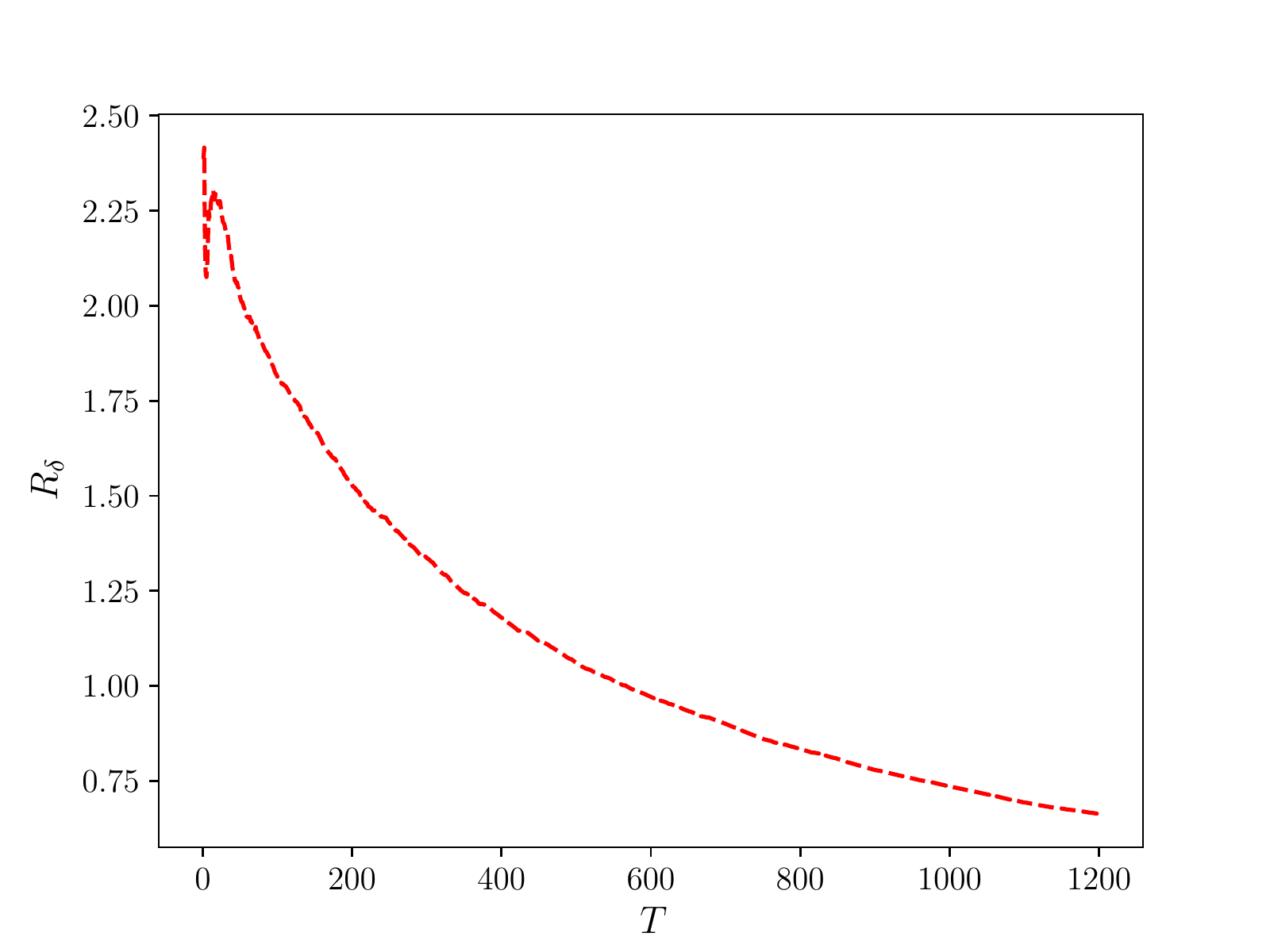}
\caption{Evolution of the terms $R_{W}$, $R_a$, and $R_\delta$ as a function of the number of iterations $T$, for one typical instance of an $(\epsilon,\gamma) = (0.3,0.05)$ network of Table \ref{tb:iris} (Iris data set). }
\label{fig:iris-all}%
\end{figure}

\section{Methods}

\subsection{Robust Inner Product Estimation running time.} 

In \cite{kerenidis2018qmeans}, the authors give a quantum inner product estimation (IPE) algorithm which allows one to perform the mapping $\ket{x}\ket{y}\ket{0}\ra \ket{x}\ket{y}\ket{s}$, where $s$ satisfies $\abs{s-\< x,y\>} \leq  \epsilon$ with probability at least $1-\gamma$.  The running time is $\widetilde{O}\left(\frac{T_U  \log(1/\gamma)}{ \epsilon}\norm{x}\norm{y}\right)$, where $T_U$ is the time require to implement unitary operations for creating states $\ket{x}, \ket{y},\ket{\norm{x}},\ket{\norm{y}}$.  

The RIPE algorithm is a generalization to the case where one only has access to estimates $\overline{\norm{x}},\overline{\norm{y}}$ of the vector norms, satisfying  $\abs{\overline{\norm{x}} - \norm{x}}\leq \frac{\epsilon}{3} \norm{x}$ and $\abs{\overline{\norm{y}} - \norm{y}}\leq \frac{\epsilon}{3} \norm{y}$ and where one would like to obtain inner product estimates to either additive or multiplicative error. For normalized vectors $\ket{x}, \ket{y}$, the IPE algorithm runs in time $\widetilde{O}\lp \frac{T_U\log(1/\gamma)}{\epsilon'}\rp $ and outputs $\abs{s-\braket{x}{y} } \leq  \epsilon'$.
By taking $ \epsilon' = \frac{\epsilon}{4} \abs{\braket{x}{y}}$ we obtain a relative error algorithm in time
$\widetilde{O}\left(\frac{T_U \log(1/\gamma)}{ \epsilon}\frac{1}{ \abs{\braket{x}{y}}}\right)$. Outputting the estimator 
$s'=\overline{\norm{x}}\,\overline{\norm{y}} s$ then satisfies
\begin{linenomath*}
\begin{eqnarray}
\abs{s' - \<x,y\> } & \leq &  
\abs{s' -  \norm{x}\norm{y} s }+
\abs{ \norm{x}\norm{y}s - \norm{x}\norm{y} \ip{x}{y} }\\
& \leq & [(1+\epsilon/3)^2 - 1 ] \norm{x}\norm{y} (1+\epsilon/4)\<x|y\> + \epsilon/4 \<x,y\> \\
& \leq & \epsilon \abs{\ip{x}{y}}
\end{eqnarray}
\end{linenomath*}
for small enough $\epsilon$. An absolute error estimate can similarly be obtained by taking replacing $\epsilon$ above with $\epsilon/|\< x,y\>|$. We thus have an algorithm that runs in time 
$T_{\text{IPE}}(x,y) = \widetilde{O}\left(\frac{T_U \log(1/\gamma)}{ \epsilon} \frac{\norm{x}\norm{y}}{\max\{1, \abs{\<x,y\>}\}} \right)$ and achieves an error of $|s-\< x,y\>| \leq  \max \left\{ \epsilon \abs{\ip{x}{y}} , \epsilon \right\}$.

\subsection{Constructing the weight matrix states and estimating their norms}\label{ss:weights}

If, at each iteration $t$, the values $a^{t,m,l}_j$ and $-\frac{\eta^{t,l}}{M} \delta_j^{t,m,l}\norm{a^{t,m,l-1}}$ are stored in qRAM for all $m\in[M]$, then the states $\ket{a^{t,m,l-1}}$ and 
\begin{linenomath*}
\begin{equation*}
 \ket{X^{[t,l,j]}} = \frac{1}{\norm{X^{[t,l,j]}}_{F}}\sum_{\tau=0}^{t-1}\sum_{\mu=1}^M -\frac{\eta^{\tau,l}}{M} \delta_j^{\tau,\mu,l}\norm{a^{\tau,\mu,l-1}}\ket{\tau}\ket{\mu}   
\end{equation*}
\end{linenomath*}
can be created coherently in time polylogarithmic in $TMN$. That is, unitary operators $U_X$ and $U_a$ can be implemented in this time that effect the transformations:
\begin{linenomath*}
\begin{align*}
    U_X\ket{t}\ket{l}\ket{j}\ket{0}\ket{0} &\ra \ket{t}\ket{l}\ket{j}\ket{X^{[t,l,j]}} \\
    U_a\ket{l}\ket{t}\ket{m}\ket{0}&\ra 
    \ket{l}\ket{t}\ket{m}\ket{a^{t,m,l-1}}
\end{align*}
\end{linenomath*}
Application of $U_X$ on the first five registers of state $\ket{t}\ket{l}\ket{j}\ket{0}\ket{0}\ket{0}$, followed by application of $U_a$ on registers $\{2,4,5,6\}$ produces the state
\begin{linenomath*}
\begin{equation*}
    \ket{t}\ket{l}\ket{j}\frac{1}{\norm{X^{[t,l,j]}}_{F}}\sum_{\tau=0}^{t-1}\sum_{\mu=1}^M-\frac{\eta^{\tau,l}}{M}\delta_j^{\tau,\mu,l}\ket{\tau}\ket{\mu}\sum_k a^{\tau,\mu,l-1}_k\ket{k}.
\end{equation*}
\end{linenomath*}
Applying the Hadamard transformations $\ket{\tau}\ra \frac{1}{\sqrt{t}}\sum_{x}(-1)^{\tau\cdot x}\ket{x}$ and $\ket{\mu}\ra \frac{1}{\sqrt{M}}\sum_{y}(-1)^{\mu\cdot y}\ket{y}$ leads to
\begin{linenomath*}
\begin{align}
&\ket{t}\ket{l}\ket{j}\frac{1}{\norm{X^{[t,l,j]}}_{F}\sqrt{Mt}}\sum_{x}\sum_y\lp \sum_{k=1}^N \sum_{\tau=0}^{t-1}\sum_{\mu=1}^M -\frac{\eta^{\tau,l}}{M}(-1)^{\tau\cdot x +\mu\cdot y}\delta_j^{\tau,\mu,l} a_k^{\tau, \mu, l-1}\ket{k}\rp\ket{x}\ket{y}  \\
&=\ket{t}\ket{l}\ket{j}\lp \sin\theta \ket{W^{t,l}_j}\ket{0}\ket{0} + \cos\theta\ket{\textsf{junk}}\ket{00^\perp}\rp, \label{eq:W-amp-est}
\end{align}
\end{linenomath*}
where
\begin{linenomath*}
\begin{equation*}
 \ket{W^{t,l}_j} = \frac{1}{\norm{W^{t,l}_j}}\sum_{k=1}^{n_l} W^{t,l}_{jk}\ket{k} = \frac{1}{\norm{W^{t,l}_j}}\sum_{k=1}^N \lp \sum_{\tau=0}^{t-1}\sum_{\mu=1}^M \frac{-\eta^{\tau,l}}{M} \delta^{\tau,\mu,l}_j a^{\tau,\mu,l-1}_k\rp \ket{k},   
\end{equation*}
\end{linenomath*}
$\sin^2\theta = \frac{\norm{W^{t,l}_j}^2}{\norm{X^{[t,l,j]}}_{F}^2Mt}$ and $\ket{00^\perp}$ is a state orthogonal to $\ket{0}\ket{0}$.
By the well-known quantum procedures of amplitude amplification and amplitude estimation \cite{brassard2002quantum}, given access to a unitary operator $U$ acting on $k$ qubits such that $U\ket{0}^{\otimes k} = \sin(\theta) \ket{x, 0} + \cos(\theta) \ket{G, 1}$ (where $\ket{G}$ is arbitrary),  $\sin^{2}(\theta)$ can be estimated to additive error $\epsilon \sin^{2}(\theta)$ in time $O\lp \frac{T(U)}{\epsilon \sin(\theta)}\rp$ and 
$\ket{x}$ can be generated in expected time $O\lp \frac{T(U)}{\sin (\theta)}\rp$,where $T(U)$ is the time required to implement $U$.  Amplitude amplification applied to the unitary preparing the state in \eqref{eq:W-amp-est} allows one to generate $\ket{W^{t,l}_j}$ in time 
\begin{linenomath*}
\begin{equation*}
    T_W = O\lp \frac{\norm{X^{[t,l,j]}}_{F}}{\norm{W^{t,l}_j}}\sqrt{TM}\polylog(TMN)\rp.
\end{equation*}
\end{linenomath*}
Similarly, amplitude estimation can be used to find an $s$ satisfying $\abs{s - \frac{\norm{W^{t,l}_j}^2}{\norm{X^{[t,l,j]}}_{F}^2Mt}}\le \xi\frac{\norm{W^{t,l}_j}^2}{\norm{X^{[t,l,j]}}_{F}^2Mt}$ 
in time $ O \lp T_W/\xi\rp$. Outputting  $\overline{\norm{W_j^{t,l}}} = \norm{X^{[t,l,j]}}_{F}\sqrt{Mts}$ then satisfies $\abs{\overline{\norm{W^{t,l}_j}}- \norm{W^{t,l}_j}}\le \xi \norm{W^{t,l}_j}$, since
$| \sqrt{s} \norm{X^{[t,l,j]}}_F\sqrt{Mt} - \norm{W^{t,l}_j} | \leq \xi \norm{W^{t,l}_j} $. 

If the values $\delta^{t,m,l}_j$ and $\frac{-\eta^{t,l}}{M}a^{t,m,l-1}_j\norm{\delta^{t,m,l}}$ are also stored at every iteration,  analogous results also hold for creating quantum states $\ket{\lp W^{t,l}\rp^T_j} $ corresponding to the columns of  $W^{t,l}$, except in this case the key ratio $\norm{W^{t,l}_j}/\norm{X^{[t,l,j]}}_F$ that appeared in the previous proof is replaced by  $\norm{\lp W^{t,l}\rp^j}/\norm{\tilde{X}^{[t,l,j]}}_F$, where $\norm{\tilde{X}^{[t,l,j]}}_F$ is the norm of the quantum state
\begin{linenomath*}
\begin{equation*}
    \ket{\tilde{X}^{[t,l,j]}} = \frac{1}{\norm{\tilde{X}^{[t,l,j]}}_{F}}\sum_{\tau=0}^{t-1}\sum_{\mu=1}^M -\frac{\eta^{\tau,l}}{M} a_j^{\tau,\mu,l-1}\norm{\delta^{\tau,\mu,l}}\ket{\tau}\ket{\mu}.
\end{equation*}
\end{linenomath*}


\subsection{Quantum-inspired classical algorithm for $(\epsilon,\gamma)$-feedforward network training.}

Suppose that for a given $l\in\{2,\ldots, L\}$ and $j\in[n_l]$ the following are stored in an $\ell_2$-BST: 
\begin{linenomath*}
\begin{align*}
  a^{\tau,\mu,l-1} &\qquad \forall \tau\in\{0,1\ldots, t-1\}, \forall \mu\in[M] \\
  a^{t,m, l-1} &\\
  X^{[t,l,j]} &
\end{align*}
\end{linenomath*}
It is then possible to sample $(\tau,\mu)$ with probability $P(\tau,\mu) = \frac{\lp X^{[t,l,j]}\rp^2_{\tau,\mu}}{\norm{X^{[t,l,j]}}^2_F}$, and $k$ with conditional probability $P(k\mid \tau,\mu) = \lp\frac{a^{\tau,\mu,l-1}_k}{\norm{a^{\tau,\mu,l-1}}}\rp^2$ in time $O(\polylog (TMN))$.  The random variable $Z(k,\tau,\mu) \defeq a_k^{t,m,l-1}\frac{\norm{a^{\tau,\mu,l-1}}}{a_k^{\tau,\mu,l-1}} \frac{\norm{X^{[t,l,j]}}_F^2}{\lp X^{[t,l,j]}\rp_{\tau,\mu}}$ can be computed in time $O(\polylog(TMN))$, and has expectation 
\begin{linenomath*}
\begin{align*}
    \langle Z\rangle &=\sum_{k,\tau,\mu}P(k\mid \tau,\mu)P(\tau,\mu)Z(k,\tau,m)\\
&=\sum_{k,\tau,\mu}\frac{a^{\tau,\mu,l-1}_k}{\norm{a^{\tau,\mu,l-1}}} \lp X^{[t,l,j]}\rp_{\tau,\mu}a_k^{t,m,l-1}\\
&=\sum_k \lp \sum_{\tau,\mu}-\frac{\eta^{\tau,l}}{M} \delta_j^{\tau,\mu,l}a_k^{\tau,\mu,l-1}\rp a_k^{t,m,l-1} \\
&= \langle W^{t,l}_j , a^{t,m,l-1}\rangle
\end{align*}
\end{linenomath*}
and variance
\begin{linenomath*}
\begin{align*}
    \sigma^2 &\le \sum_{k,\tau,\mu} \lp a_k^{t,m,l-1}\rp^2 \norm{X^{[t,l,j]}}_F^2 =MT \norm{a^{t,m,l-1}}^2\norm{X^{[t,l,j]}}_F^2. 
\end{align*}
\end{linenomath*}
By the Chebyshev and Chernoff-Hoeffding inequalities, taking the median of $O\lp \log(1/\gamma)\rp$ averages, each an average of $O\lp \frac{1}{\epsilon'^2}\rp$ independent copies of $Z$, produces an estimate $s$ within  $\epsilon' \sigma $ of $\langle Z\rangle$. Taking $\epsilon' = \frac{\epsilon\max\left\{\abs{\langle W^{t,l}_j, a^{t,m,l-1}\rangle},1\right\}}{\sqrt{MT}\norm{a^{t,m,l-1}}\norm{X^{[t,l,j]}}_F}$ then allows an $s^{t,m,l}_j$ to be computed satisfying $\abs{s^{t,m,l}_j - \ip{W^{t,l}_j}{a^{t,m,l-1}}} \le \epsilon \max\left\{\abs{\ip{W^{t,l}_j}{a^{t,m,l-1}}},1\right\}$ with probability at least $1-\gamma$.

Replacing the RIPE procedure in Subroutine \ref{sub:q-ff} with this method for computing $s^{t,m,l}_j$ gives a quantum-inspired classical $(\epsilon,\gamma)$-feedforward subroutine which runs in time 
\begin{linenomath*}
\begin{equation*}
    \tilde{O}\lp TMN\frac{\log(1/\gamma)}{\epsilon^2}{R_a^{cl}}^{t,m}\rp,
\end{equation*}
\end{linenomath*}
where ${R^{cl}_{a}}^{t,m} = \frac{1}{N-n_1}\sum_{l=2}^L\sum_{j=1}^{n_l} \left(\frac{\norm{X^{[t,l,j]}}_F\norm{a^{t,m,l-1}}}{\max\left\{1,\abs{\langle W^{t,l}_j,a^{t,ml-1}\rangle}\right\}}\right)^2$.  Using $\lp \sum_{i=1}^n x_i\rp^2 \le n \sum_{i=1}^n x_i^2$ for $x_i \ge 0$, it follows that ${R^{cl}_{a}}^{t,m}\ge \lp R_a^{t,m}\rp^2$.

The RIPE procedure in Subroutine \ref{sub:q-bp} can similarly be replaced to obtain a quantum-inspired classical $(\epsilon,\gamma)$-backpropagation subroutine which runs in time
\begin{linenomath*}
\begin{equation*}
   \tilde{O}\lp TMN\frac{\log(1/\gamma)}{\epsilon^2}{R_\delta^{cl}}^{t,m}\rp, 
\end{equation*}
\end{linenomath*}
where ${R_\delta^{cl}}^{t,m}\ge \lp R_\delta^{t,m}\rp^2$.

Finally, by substituting the quantum $(\epsilon,\gamma)$-feedforward and quantum $(\epsilon,\gamma)$-backpropagation subroutines with their quantum-inspired classical counterparts in Subroutine \ref{sub:q-training}, one obtains a quantum-inspired classical $(\epsilon,\gamma)$-training algorithm which runs in time 
\begin{linenomath*}
\begin{equation*}
  \tilde{O}\lp \lp TM\rp^2 N\frac{\log(1/\gamma)}{\epsilon^2}\lp R_a^{cl} + R_\delta^{cl}\rp\rp,  
\end{equation*}
\end{linenomath*}
with $R_a^{cl}\ge R_a^2$ and $R_\delta^{cl}\ge R_\delta^2$.

\newpage
\bibliographystyle{unsrt}
\bibliography{bibliography.bib}

\newpage
\pagenumbering{gobble}
\section{Supplemental Material}
\subsection{Classically sampling from the RIPE distribution}\label{app:ripe}

The quantum $(\epsilon,\gamma)$-feedforward and quantum $(\epsilon,\gamma)$-backpropagation subroutines (subroutines  \ref{sub:q-ff}  and \ref{sub:q-bp} respectively) of the main text make use of the Robust Inner Produce Estimation (RIPE) procedure to compute values $s$ that satisfy
$$ \abs{s - \langle x,y\rangle} \le \max\{\epsilon\abs{\langle x,y}\rangle,\epsilon\}$$
with probability at least $1-\gamma$, for given input vectors $x, y\in\mb{R}^n$. In this section we give a classical subroutine for generating samples from RIPE distribution, for the special case of RIPE where one has exact knowledge of the the norms $\norm{x}$ and $\norm{y}$ (the generalization to the case where we only have estimates of the norms is straightforward).  At a high level, this procedure is based on the following ideas (see \cite{kerenidis2018qmeans} for more details):

\begin{itemize}
\item The inner product estimation procedure of \cite{kerenidis2018qmeans} on which RIPE is based implicitly assumes access to a unitary operator for efficiently creating the state $\ket{\phi} = \frac{\norm{x}\ket{0}\ket{x} + \norm{y}\ket{1}\ket{y}}{\sqrt{\norm{x}^2 +\norm{y}^2}}$. Applying a Hadamard operator to the first register produces the state
$$ \ket{\Psi} =  \sqrt{a}\ket{1}\ket{\Psi_1} + \sqrt{1-a}\ket{0}\ket{\Psi_0} $$

where $a = \frac{\norm{x}^2 + \norm{y}^2 - 2\langle x,y \rangle}{2\lp\norm{x}^2 + \norm{y}^2\rp}$. Obtaining an estimate $\bar{a}$ to $a$ satisfying $\abs{\bar{a} - a} \le \epsilon_a = \frac{\epsilon \max\{1, \abs{\langle x,y\rangle}\}}{\norm{x}^2 + \norm{y}^2}$ will therefore allow us to compute an estimate $s$ to $\langle x,y\rangle$ satisfying $\abs{s - \langle x,y\rangle} \le \epsilon\max\{1,\abs{\langle x,y\rangle}\}$, by taking $s= \lp\norm{x}^2 + \norm{y}^2\rp\lp 1 - 2\bar{a}\rp/2$.
\item Performing amplitude estimation \cite{brassard2002quantum} on $\ket{\Psi}$ with $\log M$ ancilla qubits returns a value $\tilde{a}$ satisfying $\abs{\tilde{a} - a}\le \frac{\pi}{M} + \lp\frac{\pi}{M}\rp^2$ with probability at least $8/\pi^2$. This process requires time $\tilde{O}\lp M T\rp$, where $T$ is the time required to implement the unitary required for the creation of state $\ket{\phi}$.  Taking $M = \left\lceil\frac{\pi}{2\epsilon_a}\lp 1 + \sqrt{1 + 4\epsilon_a}\rp\right\rceil$
suffices to ensure that $\abs{\tilde{a} - a}\le \epsilon_a$.
\item By the Hoeffding bound, repeating the above procedure $Q = \left\lceil\frac{\log(1/\gamma)}{2(8/\pi^2-1/2)^2}\right\rceil_{odd}$ times and taking the median of the results gives a value $\bar{a}$ satisfying $\abs{\bar{a} - a}\le \epsilon_a $ with probability at least $1-\gamma$. The notation $\left\lceil z \right\rceil_{odd}$ denotes the smallest odd integer greater than or equal to $z$.
\end{itemize}

The amplitude amplification part of the sampling subroutine makes use of a distance function $d:\mb{R}^2\ra \mb{Z}$ given by $d(\omega_0,\omega_1) = \min_{z\in\mb{Z}} \{z + \omega_1 - \omega_0\}$.

\begin{subroutine}(\textbf{Classically sampling from the RIPE distribution}) \label{sub:classical-ipe}\\

\noindent
{\bf Inputs}:  $x,y\in\mb{R}^n$,  $\epsilon, \gamma >0$, $Q = \left\lceil\frac{\log(1/\gamma)}{2(8/\pi^2-1/2)^2}\right\rceil_{odd}$.
\begin{enumerate}
    \item Compute $a = \frac{\norm{x}^2 + \norm{y}^2 - 2\langle x,y \rangle}{2\lp\norm{x}^2 + \norm{y}^2\rp}$, $\theta_a = \sin^{-1}\lp\sqrt{a}\rp$,  $\epsilon_a = \frac{\epsilon \max\{1, \abs{\langle x,y\rangle}\}}{\norm{x}^2 + \norm{y}^2}$, $M = \left\lceil\frac{\pi}{2\epsilon_a}\lp 1 + \sqrt{1 + 4\epsilon_a}\rp\right\rceil$
    \item For $j=1$ to $M$ do:
    \item \qquad $a_j = \sin^2\lp\pi j / M\rp$
    \item \qquad $p(a_j) = \abs{\frac{\sin\lp M d\lp j/M, \theta_a/\pi\rp\rp}{M\sin\lp d\lp j/M, \theta_a/\pi\rp\rp}}^2$ 
    \item For $q = 1$ to $Q$ do:
    \item \qquad  Sample $\tilde{a}^{(q)} \sim p$ 
    \item Compute $\bar{a} = \mathsf{median}(\tilde{a}^{(1)},\ldots \tilde{a}^{(Q)})$
    \item Return $s = \lp\norm{x}^2 + \norm{y}^2\rp(1-2\bar{a})/2$
\end{enumerate}
\end{subroutine}

Examples of such samples are shown in Fig \ref{fig:samples} for randomly chosen vectors $x$ and $y$ given by
\begin{align}
 x &= (5.88414114, 2.0327562 , 1.68155901, 7.91848042, 1.61922687),   \label{eq:x_y}\\
 y &=(5.15610287, 7.2034771 , 9.88496245, 3.46281654, 4.20607662), \nonumber
\end{align}
$\epsilon = 0.3$ and various values of $\gamma$.  These vectors have norms $\norm{x} = 10.34$ and $\norm{y} = 14.35$, and inner product $\langle x, y \rangle = 95.38$.  
\begin{figure}[h]%
\subfloat[$\gamma = 0.2$, $Q=1$, $f = 86\%$.]{\includegraphics[height=4.5cm]{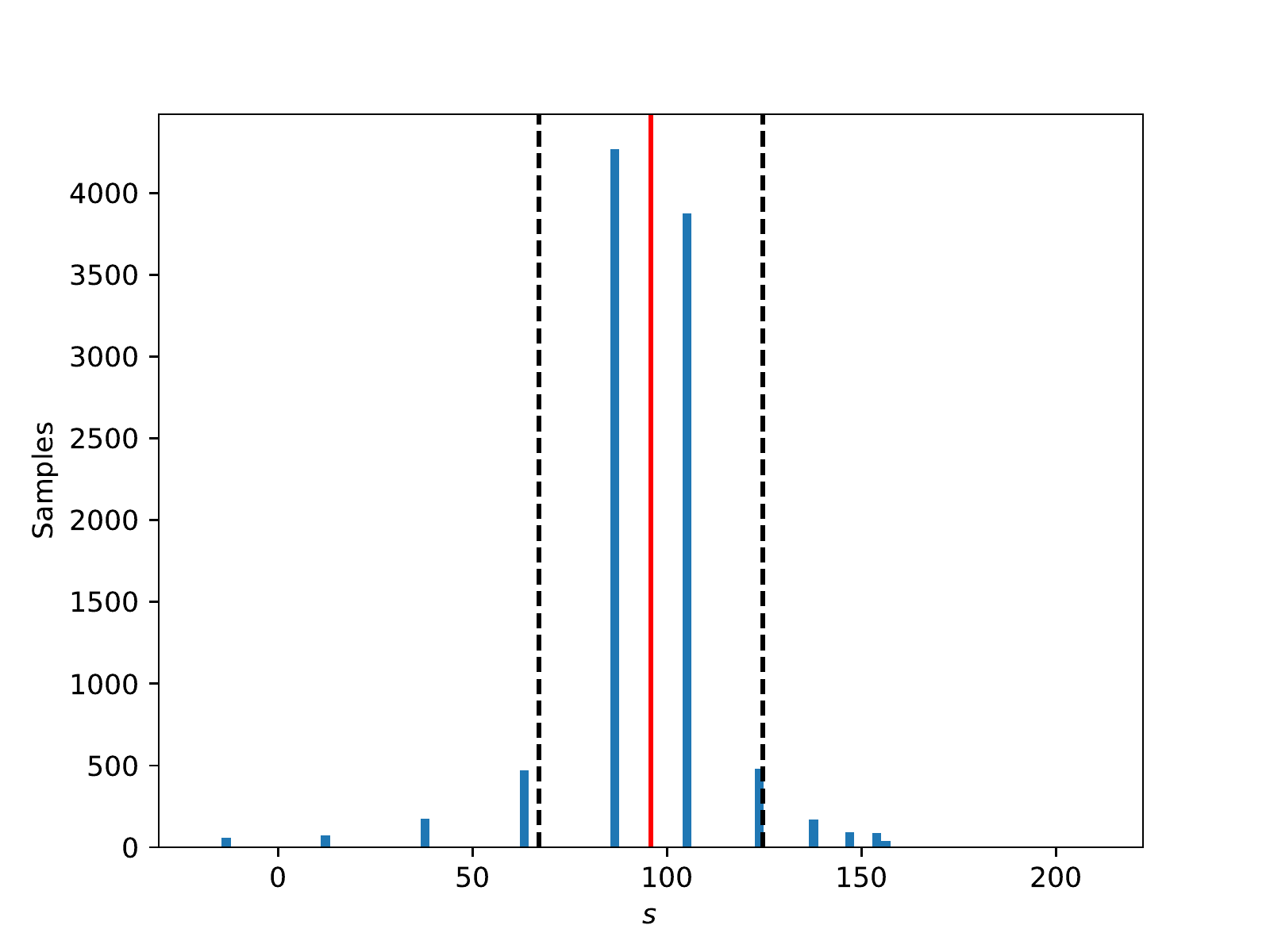}}
\subfloat[$\gamma = 0.05$, $Q = 3$, $f = 97\%$.]{\includegraphics[height=4.5cm]{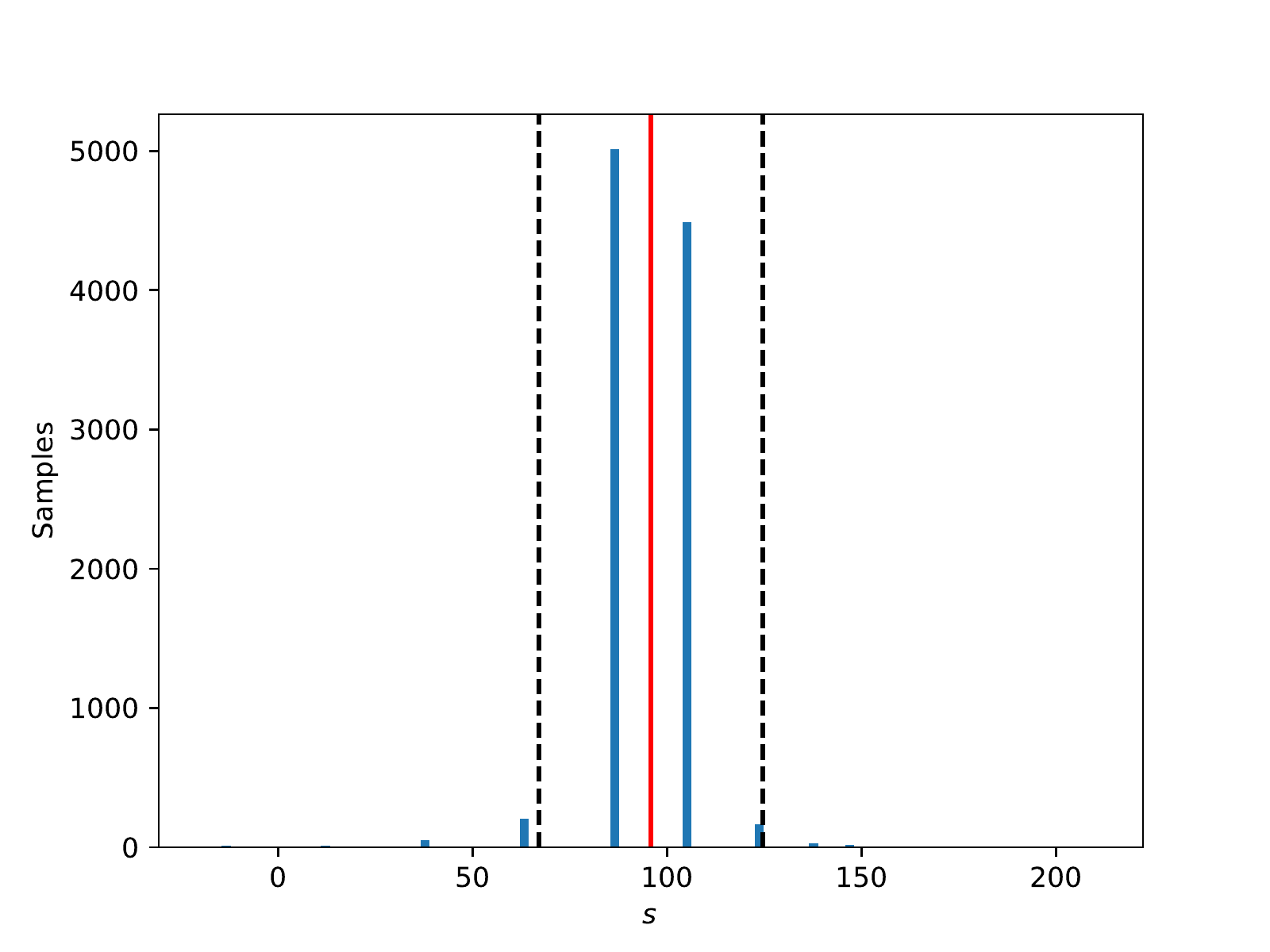}}
\subfloat[$\gamma = 0.01$,$Q = 5$, $f = 99\%$.]{\includegraphics[height=4.5cm]{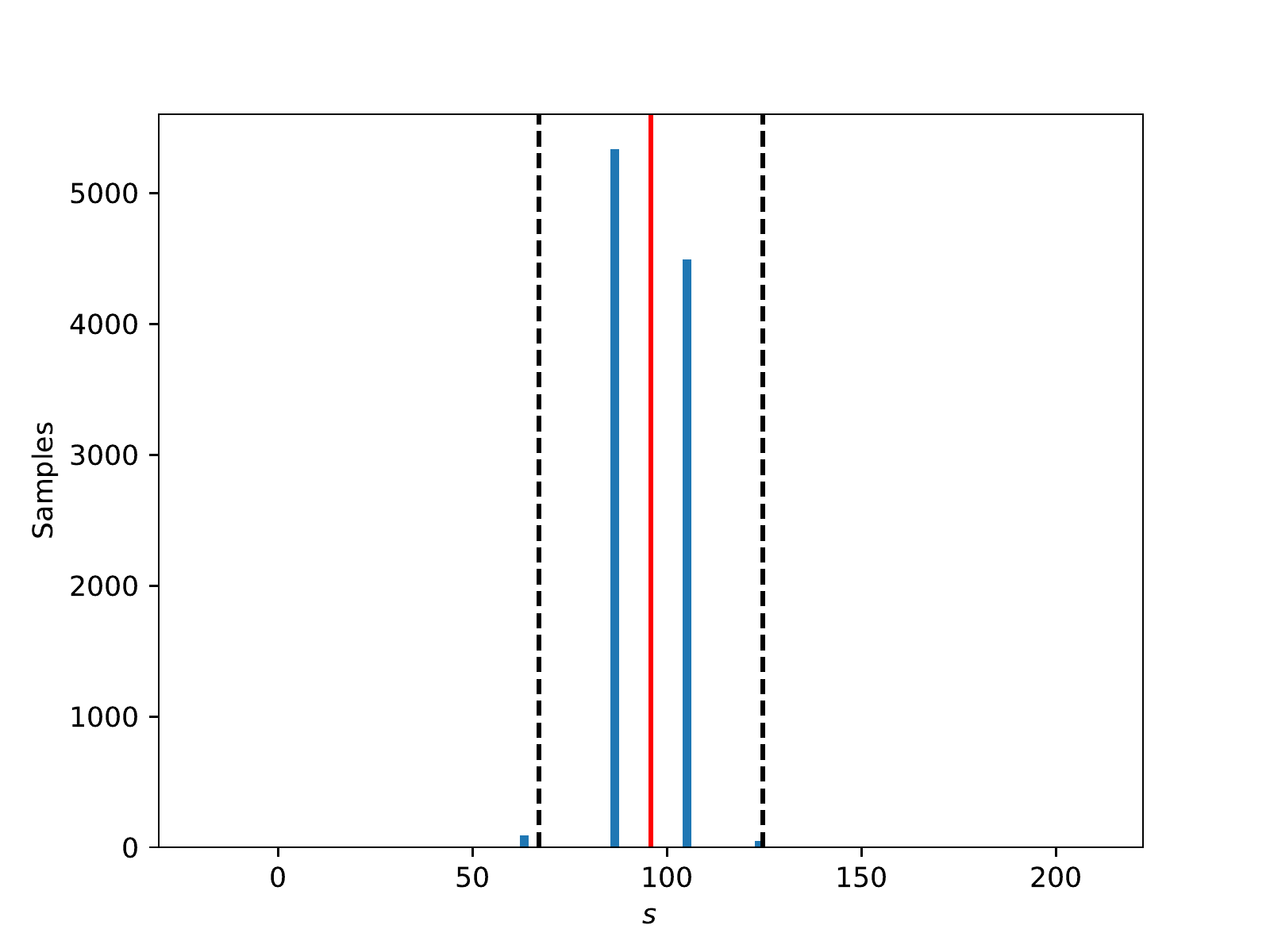}}
\caption{Concentration of $10,000$ samples of estimates $s$ of the inner product of vectors $x,y$ given in \eqref{eq:x_y}, drawn from the RIPE distribution, for $\epsilon = 0.3$ and various values of $\gamma$. The vertical red line indicates the true value of the inner product $\langle x, y\rangle = 95.83$, and the vertical dashed lines are located at $\pm \epsilon\max\{1,\abs{\langle x,y \rangle}\}$. The actual percentage of points sampled which lay within the desired range in each case is denoted by $f$. Note that in the case of $(a)$ $\gamma = 0.2$ corresponds to $1 - \gamma < 8/\pi^2$, so $Q=1$ suffices and there is no need to take medians.}
\label{fig:samples}%
\end{figure}

\end{document}